\title{Generators and Relations for $U_n(\Z[\frac{1}{2},i])$}
\author{Xiaoning Bian and Peter Selinger
  \institute{Dalhousie University}}
\begin{document}
\maketitle

\begin{abstract}
  Consider the universal gate set for quantum computing consisting of
  the gates $X$, $CX$, $CCX$, $\omega\da H$, and $S$. All of these gates
  have matrix entries in the ring $\Z[\frac{1}{2},i]$, the smallest
  subring of the complex numbers containing $\frac{1}{2}$ and
  $i$. Amy, Glaudell, and Ross proved the converse, i.e., any unitary
  matrix with entries in $\Z[\frac{1}{2},i]$ can be realized by a
  quantum circuit over the above gate set using at most one
  ancilla. In this paper, we give a finite presentation by generators
  and relations of $U_n(\Z[\frac{1}{2},i])$, the group of unitary
  $n\times n$-matrices with entries in $\Z[\frac{1}{2},i]$.
\end{abstract}

\section{Introduction}

It is well-known that certain useful classes of quantum circuits
correspond to particular subrings of the complex numbers. For example,
the well-known Clifford+$T$ circuits, which are generated by the gates
\[ X =  \begin{bmatrix}
    0 & 1 \\
    1 & 0
  \end{bmatrix},
  \quad H =  \frac{1}{\sqrt 2}\begin{bmatrix}
    1 & 1 \\
    1 & -1
  \end{bmatrix},
  \quad \omega = e^{\frac{\pi}{4}i}=\frac{1+i}{\sqrt 2},
  \quad T = \begin{bmatrix}
    1 & 0 \\
    0 & \omega
  \end{bmatrix},
  \quad CX = \begin{bmatrix}
    1 & 0 & 0 & 0 \\
    0 & 1 & 0 & 0 \\
    0 & 0 & 0 & 1 \\
    0 & 0 & 1 & 0 
  \end{bmatrix},
\]
can represent exactly those unitary matrices whose matrix entries are
in the ring $\Z[\sfrac{1}{\sqrt{2}},i]$, the smallest subring of the
complex numbers containing $\frac{1}{\sqrt{2}}$ and $i$. The
left-to-right implication is obvious, since all of the above
generators are matrices with entries in this ring. The converse was
proved in {\cite{Giles-Selinger}}, where it was shown that every
unitary matrix over $\Z[\sfrac{1}{\sqrt{2}},i]$ can be represented as
a Clifford+$T$ circuit using at most one ancilla.

This result was extended by {\cite{Ross-Glaudell-Amy}} to several
other subrings of the complex numbers. In this paper, we are concerned
with the ring $\Z[\frac{1}{2},i]$, which corresponds to the class of
quantum circuits generated by the gates $X$, $CX$, $CCX$, $K$,
and $S$. Here, $X$ and $CX$ are as above, $CCX$ is the
doubly-controlled $X$-gate, also known as the Toffoli gate, and
\[
K = \omega\da H = \frac{1}{1+i}\begin{bmatrix}
    1 & 1 \\
    1 & -1
  \end{bmatrix}, \quad S = \begin{bmatrix}
    1 & 0 \\
    0 & i
  \end{bmatrix}.
\]
An important problem in quantum computing is the simplification of
quantum circuits. Often, the goal is to minimize the resources
required for quantum computing as much as possible, for example by
reducing the number of uses of certain costly gates such as the
$T$-gate. There are many successful strategies for simplifying quantum
circuits {\cite{debeaudrap2020fast, de_Beaudrap_2020, Nam_2018,
    Kissinger-Wetering, HC-2018, AM-2019, ZhangCheng19}}.  Some of
these strategies are based on {\em rewriting}, which means applying
algebraic laws, such as $H^2=I$, to replace subcircuits by equivalent
ones. In designing such rewriting techniques, it can be useful to have
a complete set of such algebraic laws, i.e., a set of equations by
which any circuit can in principle be rewritten into any equivalent
circuit. While a complete set of such equations does not in itself
guarantee the existence of good rewriting strategies, it can be a
useful tool for designing such strategies, a method that is called
``semantically guided rewriting''.

In this paper, we contribute to the algebraic theory of circuits by
giving a presentation of the group $U_n(\Z[\frac{1}{2},i])$, the group
of unitary $n\times n$-matrices with entries in $\Z[\frac{1}{2},i]$,
in terms of generators and relations. This follows earlier work by
Greylyn {\cite{Greylyn}}, who gave a presentation for
$U_4(\Z[\sfrac{1}{\sqrt{2}},i)$, and Li et
  al. {\cite{Li-Ross-Selinger}}, who gave a presentation for
  $U_n(\Z[\frac{1}{2}])$. The question of giving a presentation for
  $U_n(\Z[\sfrac{1}{\sqrt{2}},i])$ is still open for $n\geq 5$.

\section{Notation and background}

The concepts and results of this section are similar to those used in
\cite{Giles-Selinger,Greylyn,Ross-Glaudell-Amy,Li-Ross-Selinger}.

\subsection{Rings}

As usual, we write $\N=\s{0,1,\ldots}$ for the set of natural numbers.
We also write $\Z$, $\R$, and $\C$ to denote the rings of integers,
real numbers, and complex numbers, respectively. We write $z\da$
for the complex conjugate of a complex number. Recall that in any
ring, a {\em unit} is an invertible element.

Let $\Z[i]=\s{a+bi\mid a,b\in\Z}$ be the ring of Gaussian integers.
It is a Euclidean domain, and therefore, division with remainder,
greatest common divisors, divisibility, and the concept of a prime
(i.e. a non-unit that cannot be written as the product of two
non-units) all can be used in $\zi$.  Let $\gamma = 1+i$. It is
well-known that $\gamma$ is a Gaussian prime \cite{ant-book}. Note
that $\gamma$ is a divisor of $2$, and $\gamma^2$ and $2$ divide each
other. Using Euclidean division, it is further easy to check that
\[
\zi/(\gamma) = \{0,1\},\quad
\zi /(2) = \zi/(\gamma^2) = \{0,1,i, 1+i\},\quad
\text{and}\quad
\zi /(\gamma^3) = \{\pm 1,\pm i, 0, 1+i, 1- i, 2\}.
\]

\begin{definition}
  For $x \in \Z[i]$, if $x \equiv 0 \Mod{\gamma}$, we say $x$ is
  \emph{even}, otherwise \emph{odd}.
\end{definition}

For example, $2+3i$ and $5+2i$ are odd, and $2+4i$ and $1+3i$ are
even. In particular, in our context, by an even Gaussian integer we
mean one that is divisible by $\gamma$, not necessarily by $2$.
However if $x\in \Z$, then $x$ is even (odd) as an integer if and only
if $x$ is even (odd) as a Gaussian integer. Therefore, our definition
of parity extends the usual one on the integers. The following lemmas
are straightforward.

\begin{lemma} \label{lem:parity-dagger-norm}
  Let $\alpha =a+bi \in \zi$. Then $\alpha$ is odd if and only if
  $\norm{\alpha}^2=a^2+b^2$ is an odd integer.
\end{lemma}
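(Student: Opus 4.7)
The plan is to compute in the residue ring $\zi/(\gamma)$, which by the excerpt is the two-element set $\{0,1\}$, hence isomorphic to $\Z/2\Z$. First I would observe that $\gamma = 1+i$ gives $i \equiv -1 \Mod{\gamma}$, so for any $\alpha = a+bi \in \zi$ we have
\[
\alpha = a + bi \equiv a - b \Mod{\gamma}.
\]
Since the composition $\Z \hookrightarrow \zi \twoheadrightarrow \zi/(\gamma)$ is a ring homomorphism onto a two-element ring, it must factor through $\Z/2\Z$ in the only way possible, namely sending an integer $n$ to $n \bmod 2$. Hence $\alpha \equiv 0 \Mod{\gamma}$ iff $a - b$ is even as an ordinary integer, iff $a$ and $b$ have the same parity.

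Now I turn to $\norm{\alpha}^2 = a^2 + b^2$. Since squaring preserves parity on $\Z$, $a^2 + b^2$ is odd iff exactly one of $a,b$ is odd, that is, iff $a$ and $b$ have \emph{different} parities. Combining the two equivalences:
\[
\alpha \text{ is odd} \iff a \text{ and } b \text{ have different parities} \iff a^2 + b^2 \text{ is odd},
\]
which is exactly the claim.

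There is no real obstacle here; the only subtlety worth stating carefully is the identification $\zi/(\gamma) \cong \Z/2\Z$, which follows from the enumeration of $\zi/(\gamma) = \{0,1\}$ already given in the excerpt together with the fact that the composite map $\Z \to \zi/(\gamma)$ is a surjective ring homomorphism with kernel $2\Z$. A purely computational alternative, avoiding the isomorphism, is to note that $\gamma\da\gamma = 2$, so $\gamma \mid \alpha$ implies $2 \mid \alpha\da\alpha = \norm{\alpha}^2$, while conversely, if $2 \mid \norm{\alpha}^2 = \alpha\da\alpha$, then by primality of $\gamma$ in $\zi$ we have $\gamma \mid \alpha$ or $\gamma \mid \alpha\da$, and the latter implies the former because $\gamma\da = -i\gamma$ is an associate of $\gamma$.
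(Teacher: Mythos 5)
Your proof is correct. The paper itself does not supply an argument here; it simply labels this lemma (together with Lemma~\ref{lem:modulo-1-3}) as ``straightforward'' and moves on, so there is no authorial proof to compare against. Both of the routes you sketch are sound: the parity argument (reduce mod $\gamma$, use $i\equiv -1$ to get $\alpha\equiv a-b$, and match this against the parity of $a^2+b^2$) is the kind of elementary computation the authors most likely had in mind, while the second route via $\gamma\gamma\da = 2$, the fact that $\gamma\da=-i\gamma$ is an associate of $\gamma$, and primality of $\gamma$ is a cleaner ``structural'' argument that generalizes more readily to other quadratic integer rings. One small stylistic point: in the second argument the step from $2\mid\alpha\da\alpha$ to $\gamma\mid\alpha\da\alpha$ deserves a word (it uses $\gamma\mid 2$), and the final sentence is a bit compressed --- the cleanest phrasing is that $\gamma\mid\alpha\da$ implies, by conjugating, $\gamma\da\mid\alpha$, and then $\gamma\mid\alpha$ since $\gamma$ and $\gamma\da$ are associates. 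But the substance is all there.
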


\begin{lemma} \label{lem:modulo-1-3}
  If $\alpha \in \zi$ is odd, then $\alpha \equiv \pm 1, \pm i
  \Mod{\gamma^3}$. In other words, $\alpha \equiv i^e \Mod{\gamma^3}$
  for some $e\in\s{0,1,2,3}$. Moreover, $\alpha \equiv i^e
  \Mod{\gamma^2}$ for some $e\in\s{0,1}$.
\end{lemma}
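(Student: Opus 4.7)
The plan is to read the result directly off the two residue systems that were already computed in the paragraph preceding the lemma, namely
\[
  \zi/(\gamma^3) = \{\pm 1,\pm i, 0, 1+i, 1- i, 2\}
  \quad\text{and}\quad
  \zi/(\gamma^2) = \{0,1,i, 1+i\}.
\]
Since ``odd'' means ``not divisible by $\gamma$,'' I only need to sort each list into even and odd residues, and then check that the odd residues in each list are exactly powers of $i$.

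For the $\gamma^3$ case, I would first verify that the four elements $0$, $1+i$, $1-i$, $2$ are all even: $0$ is obvious, $1+i=\gamma$, $1-i = -i\gamma$, and $2 = -i\gamma^2$, so each is divisible by $\gamma$. The remaining residues in $\zi/(\gamma^3)$ are $\{1,-1,i,-i\}$, and none of these is divisible by $\gamma$ (for example, by Lemma~\ref{lem:parity-dagger-norm}, they all have odd norm $1$). Hence every odd $\alpha\in\zi$ is congruent mod $\gamma^3$ to exactly one of $1,i,-1,-i$, i.e.\ to $i^e$ for some $e\in\{0,1,2,3\}$.

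For the $\gamma^2$ claim, I would reduce mod $\gamma^2$: the residues $0$ and $1+i=\gamma$ are even, while $1$ and $i$ have odd norm and are therefore odd by Lemma~\ref{lem:parity-dagger-norm}. So an odd $\alpha$ is congruent mod $\gamma^2$ to $1$ or $i$, that is, to $i^e$ for some $e\in\{0,1\}$. (Alternatively, this follows from the $\gamma^3$ statement by further reducing mod $\gamma^2$, noting that $-1\equiv 1$ and $-i\equiv i$ modulo $\gamma^2$ since $2=-i\gamma^2$ and $1+i=\gamma$.)

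There is essentially no obstacle here: the substantive content -- computing the quotients $\zi/(\gamma^2)$ and $\zi/(\gamma^3)$ explicitly via Euclidean division -- has already been done in the text preceding the lemma. The only care needed is to make sure that the listed ``even'' residues are genuinely divisible by $\gamma$ (not merely by $2$), which is the point of defining parity via $\gamma$ rather than $2$ in the first place.
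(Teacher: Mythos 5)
Your proof is correct and is exactly the argument the paper has in mind: the paper labels this lemma (together with Lemma~\ref{lem:parity-dagger-norm}) as ``straightforward'' immediately after listing the explicit residue systems for $\zi/(\gamma^2)$ and $\zi/(\gamma^3)$, and your proof simply reads off which residues are divisible by $\gamma$. Nothing is missing; the extra observation that the $\gamma^2$ claim also follows from the $\gamma^3$ claim by further reduction is a small bonus.
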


Let $\D=\Z[\frac{1}{2}]=\s{\frac{a}{2^k}\mid a\in\Z, k\in\N}$ be the
ring of {\em dyadic fractions}, i.e., fractions whose denominator is a
power of 2. Consider $\D[i] = \s{r+si \mid r, s \in
  \D} = \Z[\frac{1}{2},i]$.  For every $t \in \di$, there exists some
natural number $k$ such that $\gamma^kt \in \zi$. This motivates the
following definition.

\begin{definition}
  Let $t \in \di$. A natural number $k \in \N$ is called a
  \emph{denominator exponent} for $t$ if $\gamma^{k} t \in \Z[i]$. The
  least such $k$ is called the \emph{least denominator exponent} of
  $t$, denoted by $\ldeg t$. Equivalently, the least denominator
  exponent of $t$ is the smallest $k\in \N$ such that $t$ can be
  written in the form $\frac{s}{\gamma^k}$, where $s \in \Z[i]$.

  More generally, we say that $k$ is a denominator exponent for a
  vector or matrix if it is a denominator exponent for all of its
  entries. The least denominator exponent for a vector or matrix is
  therefore the least $k$ that is a denominator exponent for all of
  its entries.
\end{definition}

Note that for $t\in \di$, if $k = \ldeg{t}>0$, then $\gamma^kt$ is
odd.

\subsection{One- and two-level matrices}

Consider complex matrices of dimension $n\times n$. We number the rows
and columns of matrices starting from zero, i.e., the entries of an
$n\times n$-matrix are $a_{00},\ldots,a_{n-1,n-1}$. We define a
special class of matrices called one- and two-level matrices.

\begin{definition}
  Given $z\in\C$ and $j\in\s{0,\ldots,n-1}$, the {\em one-level
    matrix} $z_{[\jay]}$ is
  \[
  z_{[\jay]} =
  \kbordermatrix{
    & \cdots & \jay & \cdots\\
    \svdots & I & 0 & 0 \\
    \jay & 0 & z & 0 \\
    \svdots & 0 & 0 & I 
  },
  \]
  i.e., the matrix that is like the $n\times n$-identity matrix,
  except that the entry at position $(\jay,\jay)$ is $z$. Similarly,
  given a $2\times 2$-matrix $U= \begin{bmatrix}
    a& b \\
    c& d
  \end{bmatrix}$
  and $\jay, \kay \in \{0,1,...,n-1\}$ with $\jay < \kay$, the
  \emph{two-level matrix} $U_{[\jay,\kay]}$ is
  \[U_{[j,k]}=
  \kbordermatrix{
    &   \dots & j & \dots & k & \dots \\
    \svdots &   I & 0 & 0 & 0 & 0 \\
    j &   0 & a & 0 & b & 0 \\
    \svdots &   0 & 0 & I & 0 & 0 \\
    k &   0 & c & 0 & d & 0 \\
    \svdots &   0 & 0 & 0 & 0 & I
  }.
  \]
\end{definition}

Note that if $|z|=1$, then $z_{[\jay]}$ is unitary; similarly, if $U$
is unitary, then so is $U_{[\jay,\kay]}$. We say that
$U_{[\jay,\kay]}$ is a two-level matrix of {\em type} $U$, and
similarly, we say that $z_{[\jay]}$ is a one-level matrix of {\em
  type} $z$.

\subsection{The exact synthesis algorithm}\label{ssec:synthesis}

Let $U_n(\D[i])$ be the group of unitary matrices with entries in
$\D[i]$. In this section, we will show that every element of
$U_n(\D[i])$ can be written as a product of one- and two-level
matrices of the form $\tli{j}{}, \tlx{j}{k}$, and $\tlk{j}{k}$, where
$i$ is the imaginary unit and the matrices $X$ and $K$ are the ones
that were defined in the introduction. In other words, we will show
that these matrices are {\em generators} for the group $U_n(\D[i])$.
We will do so by exhibiting a particular algorithm that inputs a
matrix $U\in U_n(\D[i])$ and outputs a corresponding word in the
generators. In quantum computing, such an algorithm is called an {\em
  exact synthesis algorithm} (as opposed to an approximate synthesis
algorithm, which only approximates a unitary matrix up to some given
$\varepsilon$). The algorithm in this section is adapted from
\cite{Giles-Selinger} and \cite{Ross-Glaudell-Amy}.

\begin{remark}
  A slight technical inconvenience arises because the matrix $K$ is
  not self-inverse. Therefore, in the following presentation we
  sometimes use $K$ as a generator and sometimes its inverse
  $K\da$. We have carefully chosen which one to use in each instance,
  to make the proofs in the later parts of the paper as simple as
  possible.  However, readers who wish to ignore the difference
  between $K$ and $K\da$ can safely do so, because in any case we have
  the relation $K\da=iK$, so whatever is generated by $K$ is also
  generated by $K\da$ and vice versa.
\end{remark}

We start with a number of lemmas. Since $\zi$ and $\di$ are subrings
of $\C$, the usual properties of complex numbers, complex vectors, and
complex matrices, such as complex conjugation, inner product, norm,
and conjugate transposition are naturally inherited.  For example, for
a vector $v \in \di^n$, the \emph{norm} of $v$ is $\|v\|= \sqrt
{v^{\dagger}v}$, where $v^{\dagger}$ is the conjugate transpose of
$v$. Note that $\|v\|^2 = v^{\dagger}v \in \D$, since $v^{\dagger}v \in
\D[i] \cap \R$. Similarly if $v\in \zi$, then $v^{\dagger}v \in \Z$.
A {\em unit vector} is a vector of norm 1.

If $v$ is a vector, the notation $v_j$ refers to its $j$th entry;
entries are numbered starting from $j=0$.

\begin{lemma} \label{lem:lde0}
  Let $v$ be a unit vector in $\Z[i]^n$. Then $v$ has exactly one
  non-zero entry, and that entry is a power of $i$.
\end{lemma}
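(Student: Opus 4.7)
The plan is to reduce everything to a statement about sums of squares of integers. Writing $v=(v_0,\dots,v_{n-1})$ with $v_j=a_j+b_ji\in\Z[i]$, the unit-vector condition says
\[
1 \;=\; \|v\|^2 \;=\; \sum_{j=0}^{n-1} v_j\da v_j \;=\; \sum_{j=0}^{n-1} (a_j^2+b_j^2).
\]
Each summand $a_j^2+b_j^2$ is a non-negative integer, so in the sum above exactly one summand equals $1$ and all the others equal $0$. This immediately gives the ``exactly one non-zero entry'' part: the index $j$ where $a_j^2+b_j^2=1$ is the unique $j$ with $v_j\neq 0$, since for the other indices $a_j^2+b_j^2=0$ forces $a_j=b_j=0$.

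For the remaining entry, I just need to enumerate solutions in $\Z\times\Z$ of $a^2+b^2=1$. The only possibilities are $(a,b)\in\{(\pm 1,0),(0,\pm 1)\}$, and the corresponding values of $v_j=a+bi$ are $1,-1,i,-i$, i.e.\ $i^0,i^2,i^1,i^3$. So the nonzero entry is a power of $i$, as required.

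There is no real obstacle here; the only things being invoked are that $\|v\|^2\in\Z$ for $v\in\zi^n$ (already noted just before the lemma) and that the integer equation $a^2+b^2=1$ has only the four trivial solutions. I would present the proof as these two short observations, with no case analysis beyond the enumeration above.
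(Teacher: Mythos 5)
Your proof is correct and follows essentially the same argument as the paper: reduce to the integer equation $\sum_j(a_j^2+b_j^2)=1$, observe that exactly one summand is $1$ and the rest vanish, and enumerate the solutions of $a^2+b^2=1$. The only difference is that you spell out the enumeration of $\{(\pm1,0),(0,\pm1)\}$ where the paper leaves it as ``easy to see.''
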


\begin{proof}
  First note that for $\alpha = a+ bi \in \zi$, $\|\alpha\|^2=a^2 +
  b^2$ is a non-negative integer, and $\|\alpha\| = 0$ if and only if
  $\alpha=0$.  Let $v = [v_0, v_1,...,v_{n-1}]^T$. By assumption,
  $\sum_{\jay=0}^{n-1}\|v_\jay\|^2=1$. Since each $\|v_\jay\|^2$ is a
  non-negative integer, there is exact one non-zero $v_\jay$ such that
  $\|v_\jay\|=1$. Then it is easy to see that $v_\jay \in\s{\pm 1, \pm
    i}$.
\end{proof}

\begin{corollary} \label{cor:base-case}
  Let $v$ be a unit vector in $\Z[i]^n$, and let
  $p\in\s{0,\ldots,n-1}$. Then there exists a matrix $G$ that is a
  product of one- and two-level matrices of types $i$ and $X$, such
  that $Gv=e_p$, where $e_p$ is the $p$th standard basis vector.
\end{corollary}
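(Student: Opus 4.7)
The proof is essentially a direct application of Lemma~\ref{lem:lde0} followed by two simple ``cleanup'' steps, so my plan is to carry it out in two short stages.

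First, I would invoke Lemma~\ref{lem:lde0} to conclude that $v$ has a unique nonzero entry, say at position $j$, and that this entry equals $i^e$ for some $e \in \{0,1,2,3\}$. That is, $v = i^e \, e_j$. To normalize the scalar entry to $1$, I would then apply the one-level matrix $i_{[j]}$ a total of $4-e$ times. Since $(i_{[j]})^{4-e} \cdot (i^e e_j) = i^4 e_j = e_j$, the resulting vector is the standard basis vector $e_j$. Each factor is a one-level matrix of type $i$, as required.

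Second, I need to move the $1$ from position $j$ to position $p$. If $j = p$, there is nothing left to do. Otherwise, setting $j' = \min(j,p)$ and $k' = \max(j,p)$, the two-level matrix $X_{[j',k']}$ swaps the entries at positions $j$ and $p$ while leaving all other entries fixed, so $X_{[j',k']} \, e_j = e_p$. This is a two-level matrix of type $X$.

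Composing, I would set
\[
G \;=\; X_{[j',k']} \cdot (i_{[j]})^{4-e}
\]
(omitting the $X$-factor if $j = p$, and omitting the $i$-factors if $e = 0$). Then $G$ is a product of one- and two-level matrices of types $i$ and $X$, and $Gv = e_p$ by construction. There is no real obstacle here: once Lemma~\ref{lem:lde0} pins down the extremely rigid form of $v$, the rest is bookkeeping with one-level phase corrections and a two-level swap.
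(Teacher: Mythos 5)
Your proof is correct and follows essentially the same route as the paper: invoke Lemma~\ref{lem:lde0} to pin down $v$ as a power of $i$ times a standard basis vector, use one-level matrices of type $i$ to rotate the phase to $1$, then a single two-level $X$ to move the entry to position $p$. The only cosmetic difference is that the paper chooses $e$ so that $i^e v_m = 1$ and writes a single generator $i_{[m]}^e$, whereas you write $v_m = i^e$ and apply $(i_{[j]})^{4-e}$; these are the same matrix.
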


\begin{proof}
  By Lemma~\ref{lem:lde0}, $v$ has a unique non-zero entry
  $v_m\in\s{\pm1,\pm i}$. Let $e\in\s{0,1,2,3}$ such that $i^ev_m =
  1$. Define
  \[
  G = \begin{cases}
    \tli{m}{e} & \text{if $m = p$,} \\
    \tlx{m}{p}\tli{m}{e} & \text{otherwise.}\\
  \end{cases}
  \]
  Then $Gv = e_p$ as desired.
\end{proof}

\begin{lemma} \label{lem:evenodd}
  Let $v$ be a unit vector in $\D[i]^n$ with least denominator
  exponent $k > 0$, and let $w=\gamma^k v \in \zi^n$. Then $w$ has an
  even number of odd entries.
\end{lemma}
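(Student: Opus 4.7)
The plan is to compute the sum of squared norms of the entries of $w$ and use a parity argument.

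First I would observe that since $v$ is a unit vector, $\|v\|^2 = v\da v = 1$. Multiplying by the scalar $\gamma^k$ gives $\|w\|^2 = \|\gamma^k v\|^2 = |\gamma|^{2k} \cdot \|v\|^2 = 2^k$, using that $|\gamma|^2 = |1+i|^2 = 2$. Since $k > 0$, the integer $2^k$ is even.

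Next I would unpack $\|w\|^2$ entrywise: writing $w = [w_0, \ldots, w_{n-1}]^T$ with $w_j \in \zi$, we have
\[
\sum_{j=0}^{n-1} \|w_j\|^2 = \|w\|^2 = 2^k,
\]
so this integer sum is even. By Lemma~\ref{lem:parity-dagger-norm}, for each $j$ the Gaussian integer $w_j$ is odd if and only if $\|w_j\|^2$ is an odd integer. Reducing the above equation modulo $2$, the number of indices $j$ for which $\|w_j\|^2$ is odd must itself be even, which is exactly the claim about odd entries of $w$.

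There is essentially no obstacle here; the only thing worth checking carefully is that $w$ really lies in $\zi^n$, which is immediate from the definition of denominator exponent, and that $k > 0$ is used (otherwise $2^k = 1$ would be odd and the argument would fail). The whole argument should fit in a few lines.
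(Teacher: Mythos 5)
Your proof is correct and follows essentially the same approach as the paper's: both compute $\sum_j \|w_j\|^2 = |\gamma|^{2k} = 2^k$, observe this is even (using $k>0$), and apply Lemma~\ref{lem:parity-dagger-norm} to translate parity of $\|w_j\|^2$ into parity of $w_j$. The paper phrases the evenness as a congruence modulo $\gamma$ while you phrase it modulo $2$ over the integers, but since the sum is an ordinary integer these are the same thing.
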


\begin{proof}
  We have
  \[
  \sum_{\jay=0}^{n-1} \|w_\jay\|^2 = \norm{w} = |\gamma|^{2k}\norm{v}
  = (\gamma\gamma^{\dagger})^k\equiv 0 \Mod{\gamma}.
  \]
  Therefore, there are an even number of $\jay \in \{0,...,n-1\}$ such
  that $\|w_\jay\|^2$ is odd. It follows by
  Lemma~\ref{lem:parity-dagger-norm} that an even number of $w_\jay$
  are odd.
\end{proof}

For brevity, we sometimes write $\alpha(\gamma^{\ell})$ to denote any
member of the congruence class of $\alpha$ modulo $\gamma^{\ell}$.  In
other words, we write $\alpha(\gamma^{\ell})$ for any element of
$\Z[i]$ of the form $\alpha+\beta\gamma^{\ell}$, when the value of
$\beta$ is not important.

\begin{lemma}[Row operation] \label{lem:row-op}
  Let $v\in \D[i]^{n}$ be a vector with $\ldeg{v_\jay} = \ldeg{v_\ell}
  = k>0$. Then there exists an exponent $q \in\{0,1\}$ such that,
  if we let $v' = \tlkd{\jay}{\ell}\tli{\ell}{q} v$, then
  $\ldeg{v'_\jay} , \ldeg{v'_\ell} <k$. The remaining entries of $v$
  are unchanged.
\end{lemma}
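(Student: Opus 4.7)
The plan is to expand the action of $\tlkd{\jay}{\ell}\tli{\ell}{q}$ on $v$, reduce the statement to a single divisibility question in $\zi$ modulo $\gamma^2$, and use Lemma~\ref{lem:modulo-1-3} to pick $q$. Since $\ldeg{v_\jay}=\ldeg{v_\ell}=k$, I write $v_\jay = a/\gamma^k$ and $v_\ell = b/\gamma^k$ with $a,b\in\zi$ both odd (using the remark after the definition of $\ldeg{}$ that $\gamma^k t$ is odd whenever $k=\ldeg{t}>0$). Using the identity $K\da = iK = (i/\gamma)\begin{bmatrix}1 & 1\\1 & -1\end{bmatrix}$, a direct computation yields
\[
v'_\jay = \frac{i(a + i^q b)}{\gamma^{k+1}}, \qquad v'_\ell = \frac{i(a - i^q b)}{\gamma^{k+1}},
\]
while all remaining entries of $v$ are left unchanged because both $\tli{\ell}{q}$ and $\tlkd{\jay}{\ell}$ act as the identity outside rows $\jay$ and $\ell$. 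Thus the conclusion $\ldeg{v'_\jay},\ldeg{v'_\ell}<k$ is equivalent to the two divisibility conditions $\gamma^2 \mid a + i^q b$ and $\gamma^2 \mid a - i^q b$.

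The next observation collapses these two conditions into one: their sum is $2a$, and $2 = -i\gamma^2$ is divisible by $\gamma^2$, so if one of $a \pm i^q b$ is divisible by $\gamma^2$ then so is the other. It therefore suffices to exhibit $q \in \{0,1\}$ with $a + i^q b \equiv 0 \Mod{\gamma^2}$.

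For this I apply Lemma~\ref{lem:modulo-1-3}: since $a$ and $b$ are odd, there exist $e_a, e_b \in \{0,1\}$ with $a \equiv i^{e_a}$ and $b \equiv i^{e_b} \Mod{\gamma^2}$. Multiplying through by the unit $i^{-e_a}$, the required congruence becomes $1 + i^{q+e_b-e_a} \equiv 0 \Mod{\gamma^2}$, and inspecting the four residues $\{0,1,i,1+i\}$ shows that $1 + i^m \equiv 0 \Mod{\gamma^2}$ exactly when $m$ is even. I therefore set $q=0$ if $e_a=e_b$ and $q=1$ otherwise, completing the argument.

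The computation is essentially mechanical; the one conceptual step worth highlighting is the reduction of two divisibility conditions to one via $2 \equiv 0 \Mod{\gamma^2}$. Without this observation one might fear that a single parameter $q$ is too coarse to simultaneously control both $v'_\jay$ and $v'_\ell$, but in fact the symmetry of $K\da$ together with the structure of $\zi/(\gamma^2)$ makes the two conditions equivalent.
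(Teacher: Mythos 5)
Your proof is correct and takes essentially the same route as the paper: both expand the action of $\tlkd{\jay}{\ell}\tli{\ell}{q}$ explicitly on the two relevant entries and invoke Lemma~\ref{lem:modulo-1-3} to choose $q\in\{0,1\}$. Your observation that $\gamma^2 \mid a + i^q b$ automatically forces $\gamma^2 \mid a - i^q b$ (via $2 \equiv 0 \Mod{\gamma^2}$) is handled implicitly in the paper by first choosing $q$ with $w_\jay \equiv i^q w_\ell \Mod{\gamma^2}$, substituting $i^q w_\ell = w_\jay + \alpha\gamma^2$, and then applying $K\da$ so that both outputs are read off as divisible by $\gamma$ in one computation.
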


\begin{proof}
  Let $w_\jay = \gamma^k v_\jay$ and $w_\ell = \gamma^k v_\ell$. Then
  both $w_\jay$ and $w_\ell$ are odd. By Lemma~\ref{lem:modulo-1-3},
  there exists $q \in \s{0,1}$ such that $w_\jay \equiv i^q
  w_\ell\Mod{\gamma^2}$. Then we have
  \[
  K\da\tli{1}{q} \begin{bmatrix}
    w_\jay \\
    w_\ell
  \end{bmatrix} =
  K\da\begin{bmatrix}
    w_\jay \\
    i^qw_\ell
  \end{bmatrix}
  = K\da \begin{bmatrix}
    w_\jay \\
    w_\jay + \alpha\gamma^2
  \end{bmatrix}
  = \frac{1}{\gamma\da}\begin{bmatrix}
    2w_j+\alpha\gamma^2 \\
    -\alpha\gamma^2
  \end{bmatrix}
  = \begin{bmatrix}
    (w_j+\alpha i)\gamma \\
    -\alpha i\gamma
  \end{bmatrix}
  = \begin{bmatrix}
    0(\gamma) \\
    0(\gamma)
  \end{bmatrix}.
  \]
  This means $\ldeg{K\da\tli{1}{q} \begin{bmatrix} v_\jay
      \\ v_\ell\end{bmatrix}}<k$. Clearly, performing
  $\tlkd{\jay}{\ell}\tli{\ell}{q}$ on $v$ has the same effect, and
  does not change any entries of $v$ besides $v_\jay$ and $v_\ell$,
  proving the lemma.
\end{proof}

\begin{lemma}[Column lemma]
  Let $v$ be a unit vector in $\D[i]^n$, and
  $p\in\{0,1,...,n-1\}$. Then there exists a matrix $G$ that is a
  product of one- and two-level matrices of types $X$, $K\da$, and $i$,
  such that $Gv=e_p$.
\end{lemma}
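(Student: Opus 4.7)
The natural approach is induction on $k = \ldeg{v}$, the least denominator exponent of $v$.

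\textbf{Base case} ($k = 0$). Then $v \in \Z[i]^n$ is already a unit vector over the Gaussian integers, so Corollary~\ref{cor:base-case} produces a matrix $G$ of the required form (in fact only types $i$ and $X$ are needed).

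\textbf{Inductive step} ($k > 0$). Let $w = \gamma^k v \in \Z[i]^n$, and let $S = \{\,j \mid \ldeg{v_j} = k\,\}$. The entries of $w$ that are odd are precisely $\{w_j \mid j \in S\}$: for $j \in S$ we have $\gamma \nmid w_j$, while for $j \notin S$ the entry $v_j$ already has a smaller denominator exponent, so $\gamma \mid w_j$. By Lemma~\ref{lem:evenodd}, $|S|$ is even. I would then partition $S$ into disjoint pairs $(j_1, \ell_1), \dots, (j_m, \ell_m)$ with $j_r < \ell_r$. For each pair, Lemma~\ref{lem:row-op} supplies an exponent $q_r \in \{0,1\}$ such that the operator $M_r = \tlkd{j_r}{\ell_r} \tli{\ell_r}{q_r}$ strictly lowers the least denominator exponent of both the $j_r$th and $\ell_r$th entries, while leaving all other entries fixed. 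Because the pairs are disjoint, these operators act on disjoint index sets and their composition $G' = M_m \cdots M_1$ is well-defined and clearly a product of allowed one- and two-level matrices (types $K\da$ and $i$).

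Setting $v' = G' v$, every entry of $v'$ whose index lies in $S$ now has least denominator exponent strictly less than $k$ (by Lemma~\ref{lem:row-op}), and every entry whose index is outside $S$ is unchanged and already had least denominator exponent less than $k$. Hence $\ldeg{v'} \leq k - 1$. Since $G'$ is unitary, $v'$ is still a unit vector, so the induction hypothesis applies and yields a matrix $G''$ of the required form with $G'' v' = e_p$. Then $G = G'' G'$ completes the induction.

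The only nontrivial bookkeeping is verifying that the pairwise row operations do not interfere with one another and that they indeed reduce $\ldeg{v}$ strictly on the next step; both follow immediately from the "remaining entries unchanged" clause of Lemma~\ref{lem:row-op}. So the heart of the argument is really just the decreasing invariant, and the main conceptual obstacle is recognizing that Lemma~\ref{lem:evenodd} is exactly what guarantees a complete pairing of the odd entries, making one inductive step well-defined.
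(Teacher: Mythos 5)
Your proof is correct and follows essentially the same route as the paper: induction on $\ldeg{v}$, with Lemma~\ref{lem:evenodd} guaranteeing an even number of odd entries and Lemma~\ref{lem:row-op} reducing pairs. The only difference is structural: the paper uses a nested induction, picking one pair at a time and additionally inducting on the number of entries with least denominator exponent $k$, whereas you pair up all such entries at once and process them simultaneously before the next step on $k$; this is a cosmetic variant rather than a different argument.
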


\begin{proof}
  By induction on $k$, the least denominator exponent of $v$, and
  nested induction on the number of entries of $v$ with least
  denominator exponent $k$. If $k=0$, then the result follows from
  Corollary~\ref{cor:base-case}. Otherwise, by
  Lemma~\ref{lem:evenodd}, $v$ has at least two entries with least
  denominator exponent $k$. Pick a pair of such
  entries. Lemma~\ref{lem:row-op} yields some one- and two level
  matrices that decrease this pair's least denominator
  exponent. The result then follows by the induction hypothesis.
\end{proof}

Since each column of a unitary matrix is a unit vector, the column
lemma naturally gives a way to ``fix'' every column of a unitary
matrix to the $j$th standard basis vector.

\begin{lemma}[Matrix decomposition] \label{lem:matrix-decomp}
  Let $U$ be a unitary $n \times n$-matrix with entries in
  $\D[i]$. Then there exists a matrix $G$ that is a product of one-
  and two-level matrices of types $X$, $K\da$, and $i$ such that $GU=I$.
\end{lemma}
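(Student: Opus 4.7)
The plan is to fix the columns of $U$ one at a time from left to right, using the Column Lemma. I will maintain the invariant that after $j$ steps the current matrix has the block form $\begin{bmatrix}I_j & 0 \\ 0 & V_j\end{bmatrix}$ with $V_j \in U_{n-j}(\D[i])$. Since after $n$ steps this forces the matrix to be $I$, the desired $G$ will be the product of the generator matrices accumulated along the way.

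To start, I would apply the Column Lemma to the $0$th column of $U$ with $p = 0$, producing a matrix $G_0$ (a product of one- and two-level matrices of types $X$, $K\da$, and $i$) such that $G_0 U$ has $0$th column $e_0$. The unitarity of $G_0 U$, together with the orthonormality of its rows and columns, then forces the $0$th row of $G_0 U$ to be $e_0^T$ as well, yielding the block form with $j=1$. For the general inductive step, column $j$ of the current matrix is a unit vector in $\D[i]^n$ whose first $j$ entries vanish; applying the Column Lemma with $p = j$ gives $G_j$ fixing column $j$ to $e_j$, and unitarity again fixes row $j$ to $e_j^T$, restoring the block form with the index incremented.

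The main subtlety---and the only step requiring a careful look---is that $G_j$ can be chosen to act as the identity on rows $\{0,\ldots,j-1\}$, so that the previously fixed columns are preserved. Inspecting the proof of the Column Lemma, the generators it produces come either from the base case Corollary, which uses the index of the unique non-zero entry together with the target $p = j$, or from the Row Operation Lemma, which uses two indices sharing a positive common least denominator exponent. Since the first $j$ entries of column $j$ are zero, none of them is the unique non-zero entry, and none of them has positive least denominator exponent, so every index selected during the recursion lies in $\{j,\ldots,n-1\}$. This confirms that $G_j$ acts trivially on the first $j$ coordinates, preserves the block structure, and closes the induction, giving $G = G_{n-1}\cdots G_0$ as required.
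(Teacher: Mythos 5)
Your proof is correct and uses the same strategy as the paper---apply the Column Lemma one column at a time and invoke unitarity to preserve the block structure---except you fix columns from left to right, whereas the paper fixes the rightmost column to $e_{n-1}$ first and then recurses on the upper-left $(n-1)\times(n-1)$ block. The extra care you take to check that the Column Lemma's generators at step $j$ only involve indices in $\{j,\ldots,n-1\}$ is sound and necessary for your ordering; in the paper's right-to-left version the corresponding fact is automatic, since the recursion literally applies the Column Lemma to a smaller unitary matrix.
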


\begin{proof}
  This is an easy consequence of the column lemma. Specifically, first
  use the column lemma to find a suitable $G_1$ such that the
  rightmost column of $G_1 U$ is $e_{n-1}$. Because $G_1 U$ is
  unitary, it is of the form
  \[
  \left[\begin{array}{c|c}
    U^{\prime} & 0\\
    \hline
    0 & 1 
  \end{array}\right].
  \]
Now recursively find one- and two-level matrices to reduce
$U^{\prime}$ to the identity matrix.
\end{proof}

\begin{corollary}\label{cor:generating}
  A matrix $U$ with entries in $\D[i]$ is unitary if and only if it
  can be written as a product of one- and two-level matrices of types
  $X$, $K$, and $i$.
\end{corollary}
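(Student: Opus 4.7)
The plan is to prove both directions separately. For the backward direction (``if''), I would simply observe that each of the relevant one- and two-level matrices is unitary with entries in $\D[i]$: the one-level matrices of type $i$ have entries in $\Z[i]\seq\D[i]$ and are unitary since $|i|=1$; two-level matrices of type $X$ are unitary and have integer entries; and two-level matrices of type $K$ are unitary with entries in $\D[i]$, since $\frac{1}{1+i}=\frac{1-i}{2}\in\D[i]$. Since $U_n(\D[i])$ is closed under multiplication, any product of these lies in $U_n(\D[i])$.

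For the forward direction, the substantive work has already been done in Lemma~\ref{lem:matrix-decomp}. Given $U\in U_n(\D[i])$, that lemma yields $G$ with $GU=I$, where $G=M_1 M_2\cdots M_r$ and each $M_j$ is a one- or two-level matrix of type $X$, $K\da$, or $i$. Hence $U=M_r^{-1}\cdots M_2^{-1} M_1^{-1}$, so it suffices to express each $M_j^{-1}$ as a product of one- and two-level matrices of types $X$, $K$, and $i$.

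The key observation is that inverses of the three generator types are easy to express in the target generating set: one- and two-level matrices commute well with taking inverses in the sense that $(z_{[\jay]})^{-1}=(z^{-1})_{[\jay]}$ and $(U_{[\jay,\kay]})^{-1}=(U^{-1})_{[\jay,\kay]}$. Since $X^{-1}=X$, the $X$-type two-level inverses are themselves of type $X$. Since $K\da = K^{-1}$, the inverse of a $K\da$-type two-level matrix is a two-level matrix of type $K$. Finally, $i^{-1}=-i=i^3$, so a one-level matrix of type $i$ has inverse $(i^3)_{[\jay]}=i_{[\jay]}\cdot i_{[\jay]}\cdot i_{[\jay]}$, a product of three one-level matrices of type $i$.

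I don't anticipate a real obstacle here: the corollary is essentially a bookkeeping consequence of Lemma~\ref{lem:matrix-decomp}, and the only point requiring a moment's care is verifying that the inverses of the three generator types used in that lemma can be rewritten using only $X$, $K$, and $i$ (rather than $K\da$). This is handled by the three identities above.
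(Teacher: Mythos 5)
Your proof is correct and follows essentially the same route as the paper's: invoke Lemma~\ref{lem:matrix-decomp} to get $GU=I$, then invert term by term using $X^{-1}=X$, $(K\da)^{-1}=K$, and $i^{-1}=i^3$. You spell out the compatibility of one- and two-level matrices with inversion a little more explicitly than the paper does, but the argument is the same.
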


\begin{proof}
  The right-to-left implication is obvious, because the relevant one-
  and two-level matrices are themselves unitary. The left-to-right
  implication follows from
  Lemma~\ref{lem:matrix-decomp}. Specifically, by
  Lemma~\ref{lem:matrix-decomp}, we can find a product $G$ of one- and
  two-level matrices of types $X$, $K\da$, and $i$ such that
  $GU=I$. Then $U=G^{-1}$. Since $X^{-1}=X$, $(K\da)^{-1}=K$, and
  $i^{-1}=i^3$, the inverse $G^{-1}$ can be expressed as a product of
  one- and two-level matrices of the required types.
\end{proof}

Since the proof of Lemma~\ref{lem:matrix-decomp} is constructive, it
yields an algorithm that inputs an element of $U_n(\D[i])$ and
outputs a sequence $G$ of one- and two-level matrices of types $X$,
$K\da$, and $i$ such that $GU=I$. We call this an ``exact synthesis
algorithm''. In principle, there are many different ways to achieve
this; for example, one step in the proof of the column lemma requires
us to ``pick a pair'' of entries, and the computed sequence of
generators will depend on which pair we pick.  For the rest of this
paper, it is important to fix, once and for all, a particular
deterministic method for making these choices. Therefore, we specify
one such deterministic procedure in Algorithm~\ref{algo}.

In the following, by the {\em pivot column} of a matrix $U$, we mean
the rightmost column where $U$ differs from the identity matrix.

\begin{algorithm}[Exact synthesis algorithm]\label{algo}~

  INPUT: A unitary $n \times n$-matrix $U$ with entries in $\D[i]$.

  OUTPUT: A sequence of one- and two-level matrices $G_{l}, \ldots,
  G_{1}$ of types $X$, $K\da$, and $i$ such that $G_{l} \cdots G_{1}
  U=I$.

  STATE: Let $M$ be a storage for a unitary $n \times n$-matrix, and
  let $\vec G$ be a storage for a sequence of one- and two-level
  matrices.
  \begin{itemize}
  \item [1.] Set $M$ to be $U$, and set $\vec G$ to be the empty
    sequence.
  \item [2.] If $M=I$, stop, and output $\vec G$.
  \item [3.] Let $p$ be the index of the pivot column of $M$, and let
    $k$ be the least denominator exponent of the pivot column of $M$.
    \begin{itemize}
    \item [(a)] If $k=0$, let $\vec S$ be obtained by applying
      Corollary~\ref{cor:base-case} to the pivot column and $p$.
      
    \item [(b)] If $k>0$, Lemma~\ref{lem:evenodd} dictates there are
      an even number of entries that have least denominator exponent
      $k$. Let the indices of the first two odd entries be
      $\jay<\ell$. Let $\vec S$ be obtained by applying
      Lemma~\ref{lem:row-op} to the pivot column and $\jay,\ell$.
    \end{itemize}
    
  \item [4.] Set $M$ to be $\vec S M$, and prepend $\vec S$ to $\vec
    G$. Go to step 2.
  \end{itemize}
\end{algorithm}

We refer to the $\vec S$ in step 3 as a {\em syllable}. We can also
regard $\vec G$ as a sequence of syllables, so we can talk about the
$n$th syllable in $\vec G$.
  
The fact that the algorithm is correct and terminating is just another
way of stating Lemma~\ref{lem:matrix-decomp}. However, we will provide
a more explicit proof of correctness and termination, as this will be
useful later in this paper.

\begin{definition}\label{def:level}
  Let $M \in U_{n} (\di)$. The {\em level} of $M$, denoted by
  $\level{M}$, is defined as follows: If $M=I$, then
  $\level{M}=(0,0,0)$.  Otherwise, $\level{M}$ is a triple $(p, k, m)$
  where:
  \begin{itemize}
\item $p$ is the greatest index such that $M e_{p} \neq e_{p}$,
  i.e., the index of the pivot column.
\item $k = \lde v$, where $v$ is the pivot column. 
\item $m$ is the number of odd entries in $\gamma^k v$.
  \end{itemize}
\end{definition}

Note that $p,k,m$ are natural numbers, so the set of all possible
levels is a subset of $\N^3$. We use the lexicographic order on
$\N^3$, defined by $(p,k,m)<(p',k',m')$ iff $p<p'$ or ($p=p'$ and
$k<k'$) or ($p=p'$ and $k=k'$ and $m<m'$. This makes $\N^3$ into a
well-ordered set.

\begin{theorem}\label{thm:algo-correctness}
  Given a unitary matrix $U \in U_n(\di)$, Algorithm~\ref{algo}
  outputs a finite sequence of one- and two-level matrices $ G_{l},
  \ldots, G_{1}$ such that $ G_{l} \cdots G_{1} U=I$.
\end{theorem}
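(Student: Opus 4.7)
The plan is to split the theorem into two statements about the main loop (steps 2--4) of Algorithm~\ref{algo}: a loop invariant giving correctness, and a termination argument based on the well-ordering of levels.

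For correctness, I would maintain the loop invariant that if $\vec G = (G_l, \ldots, G_1)$ at the start of some iteration, then $M = G_l \cdots G_1 U$. This invariant holds initially, when $\vec G$ is empty and $M = U$, and it is preserved by step 4, since updating $M$ to $\vec S M$ and prepending $\vec S$ to $\vec G$ changes both sides consistently. When the loop exits in step 2 with $M = I$, the invariant yields $G_l \cdots G_1 U = I$, which is exactly the desired output.

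For termination, I would show that $\level M$ strictly decreases at each iteration under the lexicographic order on $\N^3$; termination then follows from well-ordering. A useful preliminary observation is that, since $M$ is unitary with $M e_q = e_q$ for all $q > p$, unitarity also forces $e_q^\dagger M = e_q^\dagger$ for those $q$, so $M$ has the block form
\[
M = \begin{bmatrix} M' & 0 \\ 0 & I \end{bmatrix}
\]
with $M'$ of size $(p+1) \times (p+1)$. Moreover, in both branches of step 3 the syllable $\vec S$ is built from one- and two-level matrices whose indices all lie in $\{0, \ldots, p\}$, so $\vec S M$ retains the same block form, and either $\vec S M = I$ or its new pivot index $p'$ satisfies $p' \leq p$.

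The case analysis then runs as follows. In case 3(a), where $k = 0$, Lemma~\ref{lem:lde0} says the pivot column $v$ has a unique nonzero entry at some index $r \leq p$, and the syllable from Corollary~\ref{cor:base-case} turns $v$ into $e_p$; the level therefore drops either because $\vec S M = I$ or because $p' < p$. In case 3(b), where $k > 0$, Lemma~\ref{lem:row-op} strictly reduces $\ldeg{v_\jay}$ and $\ldeg{v_\ell}$ below $k$ while leaving the other entries of $v$ untouched. If $m = 2$, then no entry of $\gamma^k \vec S v$ remains odd, so $\ldeg{\vec S v} < k$ and the level drops at least in its second coordinate; if $m \geq 4$, then $\ldeg{\vec S v}$ is still $k$ but only $m - 2$ entries of $\gamma^k \vec S v$ are odd, and the level drops in its third coordinate. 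The hardest part will be the bookkeeping in case 3(b): one must carefully verify that the remaining entries of the pivot column are truly unchanged and that no new nonzero entries appear at indices $> p$. Both facts reduce to the block-form observation above together with the constraint $\jay, \ell \leq p$, but this is the step most prone to index-confusion errors.
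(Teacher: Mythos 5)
Your proof is correct and takes essentially the same approach as the paper: the loop invariant $M = \vec G U$ gives correctness, and termination follows from the strict decrease of $\level{M}$ under the lexicographic well-order on $\N^3$. You fill in the details the paper leaves implicit (the block-form preservation and the case split on $m=2$ versus $m\geq 4$), which is a sound elaboration of the same argument.
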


\begin{proof}
  For correctness, it suffices to note that after the initialization
  in step 1, $M$ and $\vec G$ are only updated in step 4, and at each
  point in the algorithm, we have $M = \vec GU$. The algorithm only
  stops when $M=I$, at which point we therefore have $I=\vec GU$. The
  only thing that remains to be shown is that the algorithm
  terminates.

  We prove termination by well-ordered induction on the level of $M$.
  Specifically, we prove that after the algorithm reaches step 2, it
  always terminates. When $M=I$, this is trivially true. Otherwise,
  step 3 yields a syllable $\vec S$ such that $\level{\vec S M} <
  \level{M}$. Step 4 sets $M$ to $\vec S M$, and loops back to step
  2. At this point, since $M$ now has a strictly smaller level, the
  algorithm terminates by the induction hypothesis.
\end{proof}

\subsection{Monoid presentations}

We recall some basic definitions and facts about generators and
relations for monoids. A {\em monoid} is a set $G$ together with an
associative multiplication operation and a unit 1. A subset $H\seq G$
is called a {\em submonoid} of $G$ if $1\in H$ and $H$ is closed under
multiplication. Given a subset $X\seq G$, let $\gen{X}$ be the
smallest submonoid of $G$ containing $X$. We say that $X$ is a set of
{\em generators} for $G$ if $G=\gen{X}$.

A {\em congruence} on $G$ is an equivalence relation that is
compatible with multiplication. In other words, a relation ${\sim}
\subseteq G \times G$ is a congruence if it obeys the following rules:
\begin{equation*} \label{cong-rel}
  \frac{}{w \sim w} \qquad
  \frac{w \sim v \quad v \sim u}{w \sim u} \qquad
  \frac{w \sim v}{v \sim w} \qquad
  \frac{w \sim w^{\prime} \quad v \sim v^{\prime}}{w v \sim w^{\prime} v^{\prime}}.
\end{equation*}

If $\sim$ is a congruence, then the quotient $G / {\sim}$ is a
well-defined monoid. If $f: G \to H$ is a monoid homomorphism, then
its kernel is the relation $\sim_{f}$ defined by $u \sim_{f} v$ if and
only if $f(u)=f(v)$. The kernel of every monoid homomorphism is a
congruence, and conversely, every congruence $\sim$ is the kernel of
some homomorphism, namely of the canonical quotient map $G\to
G/{\sim}$.

Given a set $X$, let $X^{*}$ be the set of finite sequences of
elements of $X$, which we also call {\em words} over the alphabet
$X$. Then $X^{*}$ is a monoid, where multiplication is given by
concatenation of words, and the unit is the empty word, which we also
write as $\eps$. By abuse of notation, we regard $X$ as a subset of
$X^{*}$, i.e., we identify each element $x \in X$ with the
corresponding one-letter word $x \in X^{*}$.  Given a set $X$, a {\em
  relation} is a pair $(w, v)$ of words, where $w, v \in X^{*}$. Given
a set $\Rr$ of relations for $X$, the set of consequences of
$\Rr$, written $\simr$, is the smallest
congruence relation on $X^{*}$ containing $\Rr$.

If $X$ is a set of generators for some monoid $G$, and $w=x_{1} \ldots
x_{n} \in X^{*}$ is a word, let us write $\sem{w}=x_{1} \cdots x_{n}$ for
the element of $G$ obtained by multiplying $x_{1}, \ldots, x_{n}$. In
other words, $\sem{-}: X^{*} \to G$ is the unique monoid homomorphisms
such that $\sem{x}=x$ for all $x \in X$. We say that a relation $(w, v)$
is {\em true} in $G$ if $\sem{w}=\sem{v}$. A set of relations $\Rr$ is {\em
  sound} for $G$ if all of its consequences are true in $G$, i.e., if
$w \simr v \Rightarrow\sem{w}=\sem{v}$.  A set of relations $\Rr$ is {\em
  complete} for $G$ if all true relations are consequences of $\Rr$,
i.e., if $\sem{w}=\sem{v} \Rightarrow w \simr v$.  We say that $(X, \Rr)$
is a {\em presentation} of $G$ (also called a presentation by
generators and relations) if $X$ is a set of generators for $G$ and
$\Rr$ is a sound and complete set of relations for $G$. In this case,
we have $G\iso X^{*} / {\simr}$, i.e., the monoid $G$ is
uniquely determined, up to isomorphism, by such a presentation.

\begin{remark} \label{rem:soundness}
  To check that a set of relations $\Rr$ is sound for $G$, it suffices
  to check that $\sem{w}=\sem{v}$ holds for all $(w, v) \in \Rr$.
\end{remark}

The monoids we consider in this paper are actually groups, i.e., all
of their elements are invertible. However, when we speak of
presentations, generators, relations, congruences, etc., we still mean
this ``in the monoid sense''. In particular, we will include enough
generators and relations to imply the invertibility of all elements.
This differs from presentations ``in the group sense'', where inverses
would exist automatically, and relations such as $g^{-1} g=\eps$ would
be unnecessary. The monoid approach is advantageous for our purposes,
because it is much more straightforward to do calculations in free
monoids (where the elements are words) than in free groups (where the
elements are group-theoretic expressions modulo an equivalence
relation).

\section{A presentation of $U_n(\di)$}

\subsection{Statement of the main theorem}

Let $\Gg$ be the set of all one- and two-level matrices of types $X$,
$K$, and $i$. We refer to the elements of $\Gg$ as the {\em
  generators}. Specifically, they are:
\begin{itemize}
\item $\tlx{\jay}{\ell}$, for $0\leq\jay<\ell<n$;
\item $\tlk{\jay}{\ell}$, for $0\leq\jay<\ell<n$; and
\item $\tli{\jay}{}$, for $0\leq\jay<n$.
\end{itemize}
We already saw in Corollary~\ref{cor:generating} that $\Gg$ indeed
generates $U_n(\di)$. The purpose of this paper is to prove the
following theorem:

\begin{theorem}[Main theorem]\label{thm:main}
  Let $\Rr$ be the set of relations shown in Figure~\ref{rels}. Then
  $(\Gg,\Rr)$ is a presentation of $U_n(\di)$.  In other words, the
  relations in Figure~\ref{rels} are sound and complete for
  $U_n(\di)$.
\end{theorem}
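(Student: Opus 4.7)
The plan is to prove soundness and completeness separately. Soundness reduces, by Remark~\ref{rem:soundness}, to checking that $\sem{w} = \sem{v}$ holds for each individual pair $(w,v) \in \Rr$ listed in Figure~\ref{rels}; this is a finite, purely mechanical matrix-multiplication check, one relation at a time. The payoff is that everything else can be done ``up to $\simr$''.

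The substantive task is completeness. Following the strategy of \cite{Giles-Selinger,Greylyn,Li-Ross-Selinger}, I would use Algorithm~\ref{algo} to define a normal form $\nf : \Gg^* \to \Gg^*$ on words, and then show that every word is $\simr$-equivalent to its normal form. Concretely, given $w \in \Gg^*$, let $\nf(w)$ be the reverse of the sequence of syllables produced by Algorithm~\ref{algo} on the matrix $\sem{w}^{-1}$; then $\sem{\nf(w)} = \sem{w}$ and, crucially, $\nf(w)$ depends only on $\sem{w}$ because the algorithm is deterministic. If I can establish that $w \simr \nf(w)$ holds for every word $w$, then whenever $\sem{w} = \sem{v}$ I have $\nf(w) = \nf(v)$ as literal words in $\Gg^*$, and therefore $w \simr \nf(w) = \nf(v) \simr v$, which is completeness.

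To establish $w \simr \nf(w)$, I would induct on the length of $w$. The base case $w = \eps$ is immediate. For the inductive step, write $w = g \cdot w'$ with $g \in \Gg$; by the inductive hypothesis $w' \simr \nf(w')$, so the problem reduces to the following key lemma: \emph{for every generator $g \in \Gg$ and every word $u$ that is in normal form, $g \cdot u \simr \nf(g \cdot u)$.} I would prove this by well-founded induction on $\level{\sem{u}^{-1}}$, using Definition~\ref{def:level} and its lexicographic order on $\N^3$. The idea is to ``propagate $g$ past the leading syllable of $u$'': determine what first syllable $\vec S$ Algorithm~\ref{algo} prescribes for $\nf(g \cdot u)$, then exhibit an $\simr$-derivation rewriting $g$ together with the leading syllable of $u$ into $\vec S$ followed by some short word $g'$ (typically a single generator or empty), and finally apply the inductive hypothesis to $g' \cdot (\text{tail of } u)$, whose associated matrix has strictly smaller level.

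The main obstacle, and the bulk of the work, is the case analysis inside this key lemma. For each of the three generator types ($\tli{\jay}{}$, $\tlx{\jay}{\ell}$, $\tlk{\jay}{\ell}$) and each possible shape of leading syllable of $u$ (depending on whether Algorithm~\ref{algo} branched via Corollary~\ref{cor:base-case} or via Lemma~\ref{lem:row-op}, and on the configuration of pivot column $p$, row indices $\jay < \ell$, and the parity exponent $q \in \{0,1\}$), I must exhibit an explicit derivation in $\simr$ that carries out the propagation. Organizing the subcases by the relative positions of the indices of $g$ and of the leading syllable (disjoint, sharing one index, or sharing both) makes the majority of subcases collapse to simple commutation relations from Figure~\ref{rels}. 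The delicate cases are the ``collisions'', where an $\tli{}{}$ or $\tlk{}{}$ generator meets a $\tlkd{}{}$-heavy syllable; these are where the nontrivial algebra of $K$ intervenes, and where the relations of $\Rr$ must be exactly strong enough to support the rewrite. Verifying this collision analysis exhaustively — and checking that no additional relations beyond those listed in Figure~\ref{rels} are needed — is where the completeness of $\Rr$ is really put to the test.
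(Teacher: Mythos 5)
Your high-level plan matches the paper's: soundness by a finite check of the relations (Remark~\ref{rem:soundness}); completeness by defining a canonical normal form via the deterministic exact synthesis algorithm, then proving that every word is $\simr$-equivalent to its normal form by an outer induction on word length and an inner well-founded induction on the level of Definition~\ref{def:level}, with a large case analysis organized by index overlap at the core. This is precisely the architecture of the paper (Lemmas~\ref{lem:maincoro}, \ref{lem:maincoro-step}, \ref{lem:main} and the Appendix).

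However, there is a genuine orientation error in the induction step that, as written, would derail the argument. You decompose $w = g\cdot w'$, peeling a generator off the \emph{left} end, and your key lemma is ``$g\cdot u \simr \nf(g\cdot u)$'' for $u$ normal. But the algorithm, the syllables, the level function, and the Cayley graph are all organized around \emph{left} multiplication of matrices (row operations): if $u$ is normal then $u$ is a concatenation of syllables $N_m\cdots N_1$ with $N_1$ at the right end, and the algorithm's first syllable acts on $\sem{u}^{-1}$ from the left. Prepending $g$ to the word gives $\sem{gu}^{-1} = \sem{u}^{-1}g^{-1}$, i.e.\ a \emph{column} operation on $\sem{u}^{-1}$, which does not align with the algorithm. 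In particular, the step you describe — ``rewrite $g$ together with the leading syllable of $u$'' — is not a local rewriting: $g$ sits at the left end of $gu$ while the algorithm's first syllable $N_1$ sits at the far right end of $u$. The correct decomposition is $w = w'\cdot g$, so that $\sem{w}^{-1} = g^{-1}\sem{w'}^{-1}$ and $g$ becomes a row operation adjacent to the first syllable of $\normalword(\sem{w'}^{-1})$; the key lemma should then read $u\cdot g \simr \nf(u\cdot g)$, which is exactly the paper's Lemma~\ref{lem:maincoro-step}. Relatedly, the inner induction should be on $\level{\sem{ug}^{-1}}$ (the source of the Cayley edge), not on $\level{\sem{u}^{-1}}$, since the Main Lemma's guarantee is $\level{\vec G'} < \level{s}$ with $s = \sem{ug}^{-1}$; in retrograde cases $\level{\sem{u}^{-1}}$ is strictly smaller and the wrong measure to descend on. A smaller omission: the paper first reduces all generators to a set of \emph{basic} generators (Lemma~\ref{lem:basic-generators}) before the case analysis, which keeps the case count manageable; without this, your ``collision analysis'' would have many more configurations of overlapping indices to handle.
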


In Figure~\ref{rels}, relations {\eqref{eq:1}}--{\eqref{eq:3}} give
the order of the generators. They also ensure that all of the
generators (and therefore all words of generators) are
invertible. Relations {\eqref{eq:4}}--{\eqref{eq:9}} state that
generators with disjoint indices commute. Relations
{\eqref{eq:10}}--{\eqref{eq:12}} state that $\tlx{j}{k}$ can be used
to swap indices $j$ and $k$ in any other generator. Finally, relations
{\eqref{eq:13}}--{\eqref{eq:17}} state additional properties of the
generators. We note that the relations in Figure~\ref{rels} are not
minimal; for example, {\eqref{eq:1}} and {\eqref{eq:16}} imply
{\eqref{eq:3}}. However, we think that they are a ``nice'' set of
relations.

\begin{remark}
  To prove the soundness part of Theorem~\ref{thm:main}, by
  Remark~\ref{rem:soundness}, it suffices to check that each relation
  in $\Rr$ is true in $U_n(\di)$. This can be verified by calculation.
  The remainder of this paper is devoted to proving completeness.
\end{remark}

For convenience, we say that two words $w,u$ are {\em relationally
  equivalent} if $w\simr u$ and {\em semantically equivalent} if
$\sem{w}=\sem{u}$. Completeness is thus the statement that all
semantically equivalent words are relationally equivalent, and
soundness is the converse.

\begin{figure}
  \begin{minipage}{0.37\textwidth}
    \begin{align}
      \tli{j}{4} &\,\sim\, \eps \label{eq:1}\\
      \tlx{j}{k}^2 &\,\sim\, \eps \label{eq:2}\\
      \tlk{j}{k}^8 &\,\sim\, \eps \label{eq:3}\\
      \nonumber \\ 
      \tli{j}{}\tli{k}{} & \,\sim\, \tli{k}{}\tli{j}{} \label{eq:4}\\
      \tli{j}{}\tlx{k}{\ell} &\,\sim\,  \tlx{k}{\ell}\tli{j}{}  \label{eq:5}\\
      \tli{j}{}\tlk{k}{\ell} &\,\sim\,  \tlk{k}{\ell}\tli{j}{}  \label{eq:6}\\
      \tlx{j}{k}\tlx{\ell}{m} &\,\sim\, \tlx{\ell}{m}\tlx{j}{k}  \label{eq:7}\\
      \tlx{j}{k}\tlk{\ell}{m} &\,\sim\, \tlk{\ell}{m}\tlx{j}{k}  \label{eq:8}\\
      \tlk{j}{k}\tlk{\ell}{m} &\,\sim\, \tlk{\ell}{m}\tlk{j}{k}  \label{eq:9}
    \end{align}
  \end{minipage}
  \begin{minipage}{0.58\textwidth}
    \begin{align}
      \tli{k}{}\tlx{j}{k} &\,\sim\, \tlx{j}{k}\tli{j}{} \label{eq:10}\\
      \tlx{k}{\ell}\tlx{j}{k} &\,\sim\, \tlx{j}{k}\tlx{j}{\ell} \label{eq:11}\\
      \tlx{\jay}{\ell}\tlx{k}{\ell} &\,\sim\, \tlx{k}{\ell}\tlx{j}{k} \label{eq:11'}\\
      \tlk{k}{\ell}\tlx{j}{k} &\,\sim\, \tlx{j}{k}\tlk{j}{\ell} \label{eq:12}\\
      \tlk{\jay}{\ell}\tlx{k}{\ell} &\,\sim\, \tlx{k}{\ell}\tlk{j}{k} \label{eq:12'}\\
      \nonumber \\ 
    \tlk{j}{k}\tli{k}{2} &\,\sim\, \tlx{j}{k}\tlk{j}{k} \label{eq:13}\\
    \tlk{j}{k}\tli{k}{3}  &\,\sim\, \tli{k}{}\tlk{j}{k}\tli{k}{}\tlk{j}{k} \label{eq:14}\\
    \tlk{j}{k}\tli{j}{}\tli{k}{} &\,\sim\, \tli{j}{}\tli{k}{}\tlk{j}{k} \label{eq:15}\\
    \tlk{j}{k}^2\tli{j}{}\tli{k}{} &\,\sim\, \eps  \label{eq:16}\\
    \tlk{j}{k}\tlk{\ell}{m} \tlk{j}{\ell}\tlk{k}{m}  &\,\sim\,  \tlk{j}{\ell}\tlk{k}{m} \tlk{j}{k}\tlk{\ell}{m}\label{eq:17}
    \end{align}
  \end{minipage}
  \caption{A sound and complete set of relations for $U_n(\di)$. In
    each relation, the indices are assumed to be distinct; moreover,
    whenever a generator $\tlx{a}{b}$ or $\tlk{a}{b}$ is mentioned, we
    assume $a<b$.}\label{rels}
  \hfill\rule{0.95\textwidth}{0.1mm}\hfill\hbox{}
\end{figure}

\subsection{The Cayley graph}

In order to conceptualize the proof of Theorem~\ref{thm:main}, it is
useful to define a particular kind of infinite directed graph, which we
call the {\em Cayley graph} of $U_n(\di)$. The vertices of the Cayley
graph are the elements of $U_n(\di)$, i.e., unitary matrices with
entries in $\di$. The identity matrix $I$ plays a special role and we
call it the {\em root} of the graph. We often use letters such as
$s,r,t,q$ to denote vertices of the Cayley graph. Our Cayley graph
contains two different kinds of edges:
\begin{itemize}
\item {\em Simple edges.} Simple edges are labelled by generators
  $G\in\Gg$ and denoted by single arrows $\xrightarrow{G}$. There is
  an edge $s\xrightarrow{G}t$ if and only if $Gs = t$. For the latter
  equation to make sense, keep in mind that $G$, $s$, and $t$ are all
  elements of $U_n(\di)$.
\item {\em Normal edges.} Recall from Section~\ref{ssec:synthesis}
  that each word $\vec S$ that is produced in step 3 of
  Algorithm~\ref{algo} is called a {\em syllable}. From now on, we
  often write $N$ to denote a single such syllable (which may,
  however, be a word consisting of more than one generator). Normal
  edges are labelled by syllables, and are denoted by double arrows
  $\xRightarrow{N}$. For every non-root vertex $s$ of the Cayley
  graph, there is a unique normal edge originating at $s$, and it is
  given by $s\xRightarrow{N}t$, where $N$ is the syllable that
  Algorithm~\ref{algo} produces when $M=s$, and $t=\vec Ss$.
\end{itemize}

Note that as a consequence of these definitions, the Cayley graph has
a tree structure with respect to the normal edges. Specifically, for
every vertex $s$, there exists a unique path of normal edges starting
at $s$. Also note that if $s\xRightarrow{N} t$, then
$\level{t}<\level{s}$, so the path of normal edges starting at any
given vertex $s$ is necessarily finite. Since the root $I$ is the only
vertex with no outgoing normal edge, every such path therefore
necessarily ends at $I$, i.e., it is of the form
$s=s_0\xRightarrow{N_1}s_1\xRightarrow{N_2}\ldots\xRightarrow{N_m}s_m=I$,
for $m\geq 0$.

\begin{definition}
  Given any matrix $U\in U_n(\di)$, its {\em normal word} is the word
  $w\in\Gg^*$ defined as follows: Let
  $U=s_0\xRightarrow{N_1}s_1\xRightarrow{N_2}\ldots\xRightarrow{N_m}s_m=I$
  be the unique path of normal edges from $U$ to $I$ in the Cayley
  graph. Then we define $w$ to be the concatenation $N_m\cdots N_1$,
  where each occurrence of $K\da$ is replaced by $K^7$. Equivalently,
  $w$ is the word output by Algorithm~\ref{algo} on input $U$. Note
  that, by definition of the Cayley graph or equivalently by
  Theorem~\ref{thm:algo-correctness}, we have $\sem{w}U = I$, or in
  other words $\sem{w} = U^{-1}$. We write $w=\normalword(U)$ when $w$
  is the normal word of $U$.

  Moreover, given any word $u\in\Gg^*$, not necessarily normal, define
  its {\em normal form} to be $\normalword(\sem{u}^{-1})$.  Note that
  by definition, every word $u$ has a unique normal form, and if $w$
  is the normal form of $u$, then $\sem{w}=\sem{u}$. Moreover, if
  $\sem{u}=\sem{u'}$ then $u$ and $u'$ have the same normal form.
  We write $\nf(u)$ for the normal form of a word $u$.
\end{definition}

Given a path $\vec G=s_0\xrightarrow{G_1}s_1\xrightarrow{G_2}s_2\ldots
s_{n-1}\xrightarrow{G_n}s_n$ of simple edges in the Cayley graph, we
define the {\em level} of the path by $\level{\vec G} =
\max\s{\level{s_i} \mid i=0,\ldots,n}$. Here, the level of a vertex is
of course as defined in Definition~\ref{def:level}.

\subsection{Basic generators}

To simplify the proof of completeness, it will sometimes be useful to
consider a smaller set of generators for $U_n(\di)$, which we call the
{\em basic generators}. They are the following:
\begin{itemize}
\item $\tlx{\jay}{\jay+1}$, for $0\leq\jay<n-1$;
\item $\tlk{0}{1}$; and
\item $\tli{0}{}$.
\end{itemize}
We also call the corresponding edges of the Cayley graph {\em basic
  edges}.

\begin{lemma}\label{lem:basic-generators}
  Each generator $G\in\Gg$ can be written as a product of basic
  generators by means of repeated applications of the following
  relations:
  \begin{align*}
    \tli{\jay}{} &\simr \tlx{0}{\jay}\tli{0}{}\tlx{0}{\jay} && \mbox{for $j>1$,}\\
    \tlk{\jay}{\ell} &\simr \tlx{0}{\jay}\tlk{0}{\ell}\tlx{0}{\jay} && \mbox{for $\jay>0$,}\\
    \tlk{0}{\ell} &\simr \tlx{1}{\ell}\tlk{0}{1}\tlx{1}{\ell} && \mbox{for $\ell>1$,}\\
    \tlx{\jay}{\ell} &\simr \tlx{\jay}{\jay+1}\tlx{\jay+1}{\ell}\tlx{\jay}{\jay+1} && \mbox{for $\ell>\jay+1$.}
  \end{align*}
\end{lemma}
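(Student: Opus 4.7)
The plan is to prove the lemma in two steps. First, I verify that each of the four displayed equivalences is in fact a consequence of $\Rr$ (the relations in Figure~\ref{rels}). Second, I show by induction that every generator $G \in \Gg$ can be reduced to a product of basic generators by repeated application of these four equivalences.

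For the first step, each of the four equivalences is a direct consequence of a single ``swap'' relation from Figure~\ref{rels}, combined with \eqref{eq:2} to cancel the residual factor $\tlx{a}{b}^2 \simr \eps$. Specifically, the first equivalence comes from \eqref{eq:10} instantiated with indices $0, j$; the second from \eqref{eq:12} with indices $0, j, \ell$; the third from \eqref{eq:12'} with indices $0, 1, \ell$; and the fourth from \eqref{eq:11} with indices $j, j+1, \ell$. In each case, the derivation is a single line of algebra: from the chosen swap relation $A\,\tlx{a}{b} \simr \tlx{a}{b}\,B$ we obtain $A \simr \tlx{a}{b}\,B\,\tlx{a}{b}$ after multiplying on the right by $\tlx{a}{b}$ and applying \eqref{eq:2}.

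For the second step, I argue by induction, handling the generator types in order. For $\tlx{j}{\ell}$, I induct on the ``spread'' $\ell - j$: the base case $\ell - j = 1$ is already basic, and for $\ell - j > 1$ the fourth equivalence rewrites $\tlx{j}{\ell}$ as $\tlx{j}{j+1}\tlx{j+1}{\ell}\tlx{j}{j+1}$, whose middle factor has strictly smaller spread and so reduces to basic form by the induction hypothesis. For $\tli{j}{}$, the case $j = 0$ is already basic; for $j \ge 1$, the first equivalence (which remains valid for $j = 1$ as well, producing an already-basic expression) rewrites $\tli{j}{}$ as $\tlx{0}{j}\tli{0}{}\tlx{0}{j}$, and the $X$-factors reduce to basic generators by the previous case. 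For $\tlk{j}{\ell}$, the case $(j,\ell) = (0,1)$ is basic; if $j \ge 1$, the second equivalence reduces the situation to some $\tlk{0}{\ell}$; and if $j = 0$ with $\ell > 1$, the third equivalence reduces $\tlk{0}{\ell}$ to $\tlk{0}{1}$. The auxiliary $X$-generators introduced along the way are reduced to basic form using the first case.

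I do not anticipate any real obstacle here: the proof is essentially careful case analysis. The only mildly delicate point is checking that the induction is well-founded across the three generator types---each rule strictly decreases a natural measure of complexity (the spread $\ell - j$ in the $X$-case, and the maximum index appearing in the $i$- and $K$-cases)---but this is immediate from the form of the rules, so the reduction procedure terminates at a word of basic generators.
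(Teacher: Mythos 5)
Your proof is correct and takes essentially the same approach as the paper; the paper's proof is only a one-line remark (``by case distinction'') followed by a worked example ($\tlk{2}{4}$), whereas you spell out both the derivations of the four displayed equivalences from the relations in Figure~\ref{rels} (correctly attributing them to \eqref{eq:10}, \eqref{eq:12}, \eqref{eq:12'}, and \eqref{eq:11}, each combined with \eqref{eq:2}) and the terminating reduction procedure with an explicit well-foundedness argument.
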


\begin{proof}
  By case distinction. For example, by repeated application of these
  relations, we have:
  \[
  \def\a{\tlx{0}{1}\tlx{1}{2}\tlx{0}{1}}
  \def\b{\tlx{1}{2}\tlx{2}{3}\tlx{3}{4}\tlx{2}{3}\tlx{1}{2}}
  \tlk{2}{4} \simr \a\b\tlk{0}{1}\b\a.\qedhere
  \]
\end{proof}

Moreover, we will use the following fact: if $s\xrightarrow{G}r$ is a
simple edge of the Cayley graph and
$s\xrightarrow{G_1}s_1\xrightarrow{G_2}\ldots\xrightarrow{G_m}r$ is
the corresponding sequence of basic edges obtained from the above
relations, then $\level{s_1},\ldots,\level{s_{m-1}}\leq
\max(\level{s},\level{r})$. In other words, the conversion to basic
generators does not increase the level.

\subsection{Reduction of completeness to the Main Lemma}

We now outline our strategy for proving completeness. For pedagogical
reasons, the following lemmas are stated in the opposite order in
which they are proved, i.e., each lemma implies the one before it.
Completeness is a direct consequence of the following lemma.

\begin{lemma}\label{lem:maincoro}
  Every word is relationally equivalent to its normal form.
\end{lemma}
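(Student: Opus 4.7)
The plan is to reduce Lemma~\ref{lem:maincoro} to a ``Main Lemma'' that controls the interaction of a single generator with a normal word, and then obtain the full statement by a routine induction on the length of $u$. I expect the forthcoming Main Lemma to have the form: for every (basic) generator $G$ and every normal word $w$, we have $Gw \simr \nf(Gw)$. Restricting to basic generators is harmless, since by Lemma~\ref{lem:basic-generators} every element of $\Gg$ is relationally equivalent to a product of basic generators.

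Given the Main Lemma, the argument proceeds by induction on the length of $u$. The base case $u=\eps$ is immediate: $\sem{u}=I$ and $\nf(\eps)=\normalword(I)=\eps$, so $u\simr \nf(u)$ trivially. For the inductive step, write $u=Gv$ where $G$ is a single generator and $v$ is a strictly shorter word. By the induction hypothesis, $v \simr \nf(v)$, and since $\simr$ is a congruence, appending $G$ on the left gives $u = Gv \simr G\cdot \nf(v)$. Since $\nf(v)$ is a normal word, the Main Lemma applies and yields $G\cdot \nf(v) \simr \nf(G\cdot \nf(v))$. Because $\sem{G\cdot \nf(v)} = \sem{G}\sem{v} = \sem{u}$ and the normal form depends only on the semantics of its argument, we have $\nf(G\cdot \nf(v)) = \nf(u)$, and transitivity gives $u\simr \nf(u)$.

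The main obstacle is, of course, the Main Lemma itself rather than the inductive bookkeeping sketched above. Proving it requires showing that the relations of Figure~\ref{rels} are strong enough to simulate, purely syntactically, what Algorithm~\ref{algo} produces on the matrix $\sem{Gw}^{-1}$. Geometrically, placing a single generator at the top of the unique path of normal edges in the Cayley graph must propagate, via the relations, all the way down to the root, producing the new path of normal edges for the perturbed matrix. I expect this to proceed by a case analysis tracking how the level $(p,k,m)$ of the pivot column changes under left-multiplication by $G$: the commutation relations \eqref{eq:4}--\eqref{eq:12'} are used to push $G$ past the syllables it does not interact with, while the syllable-structure relations \eqref{eq:13}--\eqref{eq:17} are used to reorganize the syllables it does touch, so that after reduction the prefix agrees with a syllable actually produced by Algorithm~\ref{algo}. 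A plausible technical device for organizing this analysis is a secondary induction on the level of $\sem{w}^{-1}$ (or, equivalently, on the length of the normal path from $\sem{w}^{-1}$ to $I$ in the Cayley graph).
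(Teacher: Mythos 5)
Your outer induction on the length of $u$ is logically sound: peeling off a generator, invoking the congruence to get $u\simr G\cdot\nf(v)$, applying a Main Lemma, and using that $\nf$ depends only on semantics. This mirrors the paper's structure, which reduces Lemma~\ref{lem:maincoro} first to Lemma~\ref{lem:maincoro-step} and then to the local confluence statement Lemma~\ref{lem:main}.

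However, you peel off the \emph{first} letter, writing $u=Gv$, whereas the paper peels off the \emph{last} letter, $u=u'G$, and this is not a cosmetic choice. With the paper's decomposition, $s=\sem{u}^{-1}$ and $r=\sem{u'}^{-1}$ satisfy $Gs=r$: the two matrices differ by a \emph{left} multiplication, i.e., by a Cayley graph edge, and the resulting intermediate lemma ($\normalword(s)\simr\normalword(r)\,G$) is a statement about such an edge, provable by level induction via the local confluence of a normal edge with a simple edge. With your decomposition, $\sem{Gw}^{-1}=\sem{w}^{-1}G^{-1}$ differs from $\sem{w}^{-1}$ by a \emph{right} multiplication, a column operation, which is not a Cayley graph edge and does not interact cleanly with the row-operation-based exact synthesis algorithm. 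Your Main Lemma $Gw\simr\nf(Gw)$ is still a true statement (it is a special case of what is being proved), but the proof sketch you attach to it --- ``case analysis tracking how the level $(p,k,m)$ of the pivot column changes under left-multiplication by $G$,'' pushing $G$ past syllables --- is actually the analysis for the paper's $u=u'G$ version, not for your $u=Gv$ version, so there is a genuine mismatch between the lemma you stated and the way you propose to prove it. The fix is small: decompose $u=u'G$ instead; the Main Lemma then becomes $\nf(u')\,G\simr\nf(u'G)$, which unwinds to the paper's Lemma~\ref{lem:maincoro-step}, and your level-indexed case analysis (which is essentially the paper's Lemma~\ref{lem:main}, proved in the Appendix) applies to it as intended.
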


To see why the lemma implies completeness, let $v,u$ be any two words
such that $\sem{v}=\sem{u}$. Then $v$ and $u$ have the same normal
form, say $w$. By the lemma, $v\simr w\simr u$, from which
completeness follows by transitivity.

We will prove Lemma~\ref{lem:maincoro} by induction on the length of
the word $u$. Since by Lemma~\ref{lem:basic-generators}, each
generator is relationally equivalent to a word of basic generators, we
can assume without loss of generality that $u$ consists of basic
generators. For the induction step, we use the following lemma:

\begin{lemma}\label{lem:maincoro-step}
  Consider any basic edge $s\xrightarrow{G}r$ of the Cayley graph.
  Let $w=\normalword(s)$ and $u=\normalword(r)$. Then
  $w\simr uG$. Or in pictures, the following diagram commutes
  relationally:
  \[
  \xymatrix@C=0.5ex@R=3ex{
    s \ar[rrrrrrrr]^<>(.5){G}\ar@{=>}[dr]_<>(.4){N_1} &&&&&&&& r\ar@{=>}[dl]^<>(.4){N'_1}\\
    &{}\ar@{=>}[dr]_<>(.4){N_2}\ar@{}[rrrrrr]|<>(.5){\displaystyle\simr}&&&&&&{}\ar@{=>}[dl]^<>(.4){N'_2}\\
    &&{}\ar@{=>}[dr]_<>(.4){\cdots}&&&&{}\ar@{=>}[dl]^<>(.4){\cdots}\\
    &&&{}\ar@{=>}[dr]_<>(.4){N_m}&&{}\ar@{=>}[dl]^<>(.4){N'_{m'}}\\
    &&&&I \\
  }
  \]
\end{lemma}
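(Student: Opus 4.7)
The plan is to prove Lemma~\ref{lem:maincoro-step} by well-founded induction on the ``edge level'' $\max(\level{s},\level{r})\in\N^3$ of the basic edge $s\xrightarrow{G}r$. The idea is that although the normal paths from $s$ and from $r$ down to the root of the Cayley graph can be long, only the \emph{local} situation at the top of the diamond needs genuine work; everything below is handled by the induction hypothesis. I would package the local work as a separate Main Lemma, which I expect to be the real technical core of the argument.

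For the induction step, let $s\xRightarrow{N_1}s_1$ and $r\xRightarrow{N'_1}r_1$ be the unique outgoing normal edges. The Main Lemma that I would aim to establish states: there is a finite sequence of basic edges
\[
s_1 = q_0\xrightarrow{H_1}q_1\xrightarrow{H_2}\cdots\xrightarrow{H_j}q_j = r_1,
\]
each vertex $q_i$ of strictly smaller edge level than $\max(\level{s},\level{r})$, together with a word-level equivalence
\[
N_1 \;\simr\; H_j\cdots H_1\, N'_1\, G
\]
derivable from the relations of Figure~\ref{rels}. Granted the Main Lemma, the induction step unfolds mechanically: apply the induction hypothesis to each $q_{i-1}\xrightarrow{H_i}q_i$ to get $\normalword(q_{i-1})\simr\normalword(q_i)H_i$; chain these to obtain $\normalword(s_1)\simr\normalword(r_1)\,H_j\cdots H_1$; and finally prepend the remaining syllables $N_m\cdots N_2$ on the left of both sides and use the top-tile equivalence to convert $N_1$ into $H_j\cdots H_1\, N'_1\, G$, yielding $\normalword(s)\simr\normalword(r)\, G$. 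The edge cases $s=I$ or $r=I$ (where one normal edge is missing) must be checked directly: if $s=I$, then $r=G$ and we need $\eps\simr\normalword(G)\,G$, which for each of the three basic generators reduces straightforwardly to one of relations \eqref{eq:1}--\eqref{eq:3}.

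The main obstacle is clearly the Main Lemma itself. Its proof will be a substantial case analysis on three interacting parameters: the basic generator $G$ (one of $\tlx{j}{j+1}$, $\tlk{0}{1}$, or $\tli{0}{}$); the shape of the pivot column $v$ of $s$---its index $p$, its least denominator exponent $k$, and, when $k>0$, the indices $\jay<\ell$ of the first two odd entries of $\gamma^k v$---which determines the syllable $N_1$ produced by Algorithm~\ref{algo}; and how the indices of $G$ sit relative to $p$, $\jay$, and $\ell$. When the indices of $G$ are disjoint from those of $N_1$, the tile should follow from the commutation relations \eqref{eq:5}--\eqref{eq:9}; when they overlap only by a permutation, the index-swapping relations \eqref{eq:10}--\eqref{eq:12'} should suffice. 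The hardest subcases are those where $G$ genuinely alters the pivot-column data, so that Algorithm~\ref{algo} makes a different decision at $r$ than at $s$: here the structural relations \eqref{eq:13}--\eqref{eq:17}, and especially the Yang--Baxter-like identity \eqref{eq:17}, will be essential, and the bridge path $s_1\to\cdots\to r_1$ may have several intermediate edges whose level-decrease must be verified explicitly. I expect the bulk of the work to lie in these ``interference'' cases, where $G$ directly interacts with the choices made by the synthesis algorithm at $s$.
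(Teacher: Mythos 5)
Your plan is essentially the paper's: isolate the local work at the top of the diamond into a Main Lemma (the paper's Lemma~\ref{lem:main}), and close the rest of the diagram by applying the induction hypothesis along the bridge. However, the top-tile relation you state is false. Reading words as matrix products acting on the left, $N_1 s = s_1$, $Gs = r$, $N'_1 r = r_1$, and $(H_j\cdots H_1)\, s_1 = r_1$, so the right-hand side of $N_1 \simr H_j\cdots H_1\, N'_1\, G$ sends $s$ to $(H_j\cdots H_1)\, r_1$, not to $s_1$. Concretely, with $G=\tli{0}{}$ and $N_1 = N'_1 = \tlkd{1}{2}$ (a disjoint case, bridge $\vec H = G$), your relation asserts $\tlkd{1}{2} \simr \tli{0}{}\,\tlkd{1}{2}\,\tli{0}{}$; but by \eqref{eq:6} the right side is relationally $\tlkd{1}{2}\,\tli{0}{2}$, and $\sem{\tli{0}{2}}\neq I$, so soundness rules this out. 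The relation that does close the diamond --- and the one your own chaining step actually needs when it combines $\normalword(s_1)\simr\normalword(r_1)\,H_j\cdots H_1$ with $\normalword(s)=\normalword(s_1)N_1$ and $\normalword(r)=\normalword(r_1)N'_1$ --- is $H_j\cdots H_1\,N_1\simr N'_1\,G$ (the two paths from $s$ to $r_1$), which is the paper's $\vec G'\,N\simr\vec N'\,G$.

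There is also a real, though repairable, divergence in the shape of your Main Lemma: the paper allows the normal path $\vec N'$ out of $r$ to have length $0$, $1$, or $2$, while you fix it at $1$. This matters in the ``retrograde'' cases where $G$ raises the level, since there the syllable at $r$ is $G^{-1}$, so $r_1=s$, and any bridge from $t$ to $r_1$ must pass through a vertex of level exactly $\level{s}$. The paper sidesteps this by letting $\vec N'$ take one more step down to $t$, giving $\vec G'=\eps$ and the uniform bound $\level{\vec G'}<\level{s}$, which is what supports its cleaner induction on $\level{s}$ alone. Your one-step version is exactly why you are forced to fall back on the symmetric induction variable $\max(\level{s},\level{r})$; that can be made to work, but you must then also treat the base case $r=I$ (where $s=G^{-1}$ and $r$ has no outgoing normal edge), not only $s=I$, which the paper's asymmetric induction never has to consider.
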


To see why Lemma~\ref{lem:maincoro-step} implies
Lemma~\ref{lem:maincoro}, consider any word $u$ composed of basic
generators.  If $u=\eps$, then $u$ is already normal so there is
nothing to show. Otherwise, $u=u'G$ for some generator $G$. Let
$s=\sem{u}^{-1}$ and $r=\sem{u'}^{-1}$; then $s\xrightarrow{G}r$ is a
basic edge of the Cayley graph. By Lemma~\ref{lem:maincoro-step}, we
have $\normalword(r)\,G\simr\normalword(s)$. Also, by the induction
hypothesis, since $u'$ is a shorter word than $u$, we have $u'\simr
\nf(u')$. Then the claim follows because
$u=u'G\simr\nf(u')\,G=\normalword(r)\,G \simr
\normalword(s)=\nf(u)$. The following diagram illustrates the proof in
case $u=G_4G_3G_2G_1$ is a word of length 4.
\[
\begin{tikzpicture}[scale=1.7]
  \tikzset{inner node/.style={inner sep=0,outer sep=0.5}}
  \node(s0) at (-1,1.7) {$s$};
  \node(s1) at (0,2) {$r$};
  \node(s2) at (0.9,1.7) {$\bullet$};
  \node(s3) at (1.1,0.8) {$\bullet$};
  \node(I) at (0,0) {$I$};
  \draw [->] (s0) edge[bend left=10] node[above] {$G_1$} (s1);
  \draw [->] (s1) edge[bend left=17] node[above] {$G_2$} (s2);
  \draw [->] (s2) edge[bend left=25] node[right] {$G_3$} (s3);
  \draw [->] (s3) edge[bend left=17] node[below right] {$G_4$} (I);
  \path [inner node] (s0) edge[bend right,draw=none]
  node(s01) [pos=0.2] {}
  node(s02) [pos=0.5] {}
  node(s03) [pos=0.8] {}
  (I);
  \draw [] (s0) edge[-implies,double,bend right=5] (s01);
  \draw [->] (s01) edge[-implies,double,bend right=5] (s02);
  \draw [->] (s02) edge[-implies,double,bend right=5] (s03);
  \draw [->] (s03) edge[-implies,double,bend right=5] (I);
  \path [inner node] (s1) edge[bend right,draw=none]
  node(s11) [pos=0.2] {}
  node(s12) [pos=0.5] {}
  node(s13) [pos=0.8] {}
  (I);
  \draw [] (s1) edge[-implies,double,bend right=5] (s11);
  \draw [->] (s11) edge[-implies,double,bend right=5] (s12);
  \draw [->] (s12) edge[-implies,double,bend right=5] (s13);
  \draw [->] (s13) edge[-implies,double,bend right=5] (I);
  \path [inner node] (s2) edge[bend right,draw=none]
  node(s21) [pos=0.2] {}
  node(s22) [pos=0.5] {}
  node(s23) [pos=0.8] {}
  (I);
  \draw [] (s2) edge[-implies,double,bend right=5] (s21);
  \draw [->] (s21) edge[-implies,double,bend right=5] (s22);
  \draw [->] (s22) edge[-implies,double,bend right=5] (s23);
  \draw [->] (s23) edge[-implies,double,bend right=5] (I);
  \path [inner node] (s3) edge[bend right,draw=none]
  node(s31) [pos=0.3] {}
  node(s32) [pos=0.7] {}
  (I);
  \draw [] (s3) edge[-implies,double,bend right=5] (s31);
  \draw [->] (s31) edge[-implies,double,bend right=5] (s32);
  \draw [->] (s32) edge[-implies,double,bend right=5] (I);
  \path (s01) --node[pos=0.5] {$\simr$} (s11);
  \path (s11) --node[pos=0.5] {$\simr$} (s21);
  \path (s21) --node[pos=0.6] {$\simr$} (s31);
  \path (s3) --node[pos=0.45,xshift=-2] {$\simr$} (I);
\end{tikzpicture}
\]

So now, what is left to do is to prove Lemma~\ref{lem:maincoro-step}.
We will prove this by induction on the level of $s$. The induction
step uses the following lemma.

\begin{lemma}[Main Lemma]\label{lem:main}
  Assume $s\xrightarrow{G}r$ is a basic edge of the Cayley graph, and
  $s\xRightarrow{N}t$ is a normal edge. Then there exists a sequence
  of normal edges $r\xRightarrow{\vec N'}q$ and a sequence of basic
  edges $t\xrightarrow{\vec G'}q$ such that $\vec N'G\simr \vec G'N$
  and $\level{\vec G'}<\level{s}$. Here is the relation $\vec N'G\simr
  \vec G'N$ as a diagram:
  \begin{equation}\label{eqn:lem:main}
    \vcenter{
      \xymatrix{
        s \ar[r]^{G}\ar@{=>}[d]_{N}\ar@{}[dr]|<>(.5){\displaystyle\simr} &
        r\ar@{=>}[d]^{\vec N'}\\
        t\ar@{->}[r]^{\vec G'} &
        q.
      }
    }
  \end{equation}
\end{lemma}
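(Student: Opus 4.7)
The plan is to prove the Main Lemma by a case analysis on the basic generator $G$ and on the structure of the pivot column of $s$ (which determines the syllable $N$). Let $\level{s}=(p,k,m)$, let $v$ be the pivot column of $s$, and let $w=\gamma^k v\in\zi^n$. Since both sides of~\eqref{eqn:lem:main} must represent the same matrix $qs^{-1}$, the first task in each case is to compute $r=Gs$, identify the normal edge issuing from $r$, and thereby pin down $\vec N'$; only then can one construct $\vec G'$ and verify the algebraic identity in $\Gg^*$. I split into two main branches according to whether $k=0$ or $k>0$, and within each branch sub-split on the three kinds of basic generator $G$, namely $\tli{0}{}$, $\tlk{0}{1}$, and $\tlx{j}{j+1}$.

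When $k=0$, the pivot column has a unique non-zero entry $v_a=i^e$ by Lemma~\ref{lem:lde0}, so $N$ is either $\tli{a}{4-e}$ or $\tlx{a}{p}\tli{a}{4-e}$. For most choices of $G$, the rows on which $G$ acts are disjoint from $\s{a,p}$, so the pivot column of $r=Gs$ is unchanged and one can take $\vec N'=N$, with $\vec G'$ either $G$ itself or a short conjugate of $G$ by an $X$; the diamond then closes by a commutativity relation from~\eqref{eq:4}--\eqref{eq:12'}. When $G$ does touch row $a$ or $p$, the non-zero pivot entry merely moves to a neighbouring row, and relations~\eqref{eq:10}, \eqref{eq:11}, \eqref{eq:11'} close the diamond. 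Because every generator appearing in $\vec G'$ acts only on rows with index at most $p$ and never enlarges the pivot index, the level condition $\level{\vec G'}<\level{s}$ follows automatically.

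When $k>0$, $N=\tlkd{j}{\ell}\tli{\ell}{q}$ is the row operation on the first two odd rows $j<\ell$ of $w$ supplied by Lemma~\ref{lem:row-op}. The easy sub-branch is when $G$ involves neither row $j$ nor row $\ell$: then $G$ and $N$ commute via~\eqref{eq:5}--\eqref{eq:9} or~\eqref{eq:12}--\eqref{eq:12'}, and we may take $\vec N'=N$, $\vec G'=G$. The interesting sub-cases are those where $G$ overlaps the rows of $N$: if $G=\tli{0}{}$ with $0\in\s{j,\ell}$, absorption uses~\eqref{eq:15}, possibly combined with~\eqref{eq:13}, \eqref{eq:14}, \eqref{eq:16} to rotate powers of $i$ across the $K$-factor; if $G=\tlx{j'}{j'+1}$ swaps one of $\s{j,\ell}$ in or out of the odd-row set, one conjugates $N$ with $X$ using~\eqref{eq:10}, \eqref{eq:12}, \eqref{eq:12'}, which may change $q$ or the chosen pair in $\vec N'$; and if $G=\tlk{0}{1}$ with $\s{0,1}\cap\s{j,\ell}\neq\emptyset$, the content is the Yang--Baxter-style relation~\eqref{eq:17}. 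In each sub-case a short calculation shows that the modified pivot-column entries still have least denominator exponent at most $k$ and that the intermediate pivot indices along $\vec G'$ never exceed $p$, so $\level{\vec G'}<\level{s}$.

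The main obstacle will be the combinatorial blow-up of sub-cases rather than any single deep step. The hardest individual identity is the overlap $\tlk{0}{1}\cdot\tlkd{j}{\ell}$ case, where relation~\eqref{eq:17} does the real work; the fiddliest bookkeeping is verifying that Algorithm~\ref{algo} picks the intended first two odd indices in $r=Gs$ after such a rewrite, so that the normal-edge segment $\vec N'$ really has the form claimed. That bookkeeping relies on tracking parities of entries modulo $\gamma$, $\gamma^2$, and $\gamma^3$ via Lemmas~\ref{lem:parity-dagger-norm} and~\ref{lem:modulo-1-3}, and on the fact (from Lemma~\ref{lem:basic-generators}) that conversion between basic and full generators never increases the level, which guarantees the level bound on $\vec G'$ is preserved throughout each rewrite.
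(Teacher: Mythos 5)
Your high-level plan matches the paper's: a case analysis on the basic generator $G$ crossed with the structure of the pivot column ($k=0$ versus $k>0$, and the indices on which the syllable $N$ acts), closing each diamond with the commutation relations in the ``disjoint'' cases and with relations \eqref{eq:10}--\eqref{eq:17} in the overlap cases, after first computing the normal edge out of $r$. You also correctly anticipate the ``retrograde'' pattern (where $G$ raises the level and the first normal step from $r$ undoes it), the role of Lemmas~\ref{lem:parity-dagger-norm} and \ref{lem:modulo-1-3} in the bookkeeping, and the fact that conversion to basic generators does not raise the level.

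However, you have mislocated both the hardest case and the place where relation~\eqref{eq:17} is actually needed, and this is not a cosmetic slip. You claim that the deep identity occurs when $G=\tlk{0}{1}$ overlaps the syllable $\tlkd{\jay}{\ell}\tli{\ell}{q}$, and that \eqref{eq:17} does the work there. In fact, in that case the paper needs only the one- and two-index relations: when $\jay=0,\ell=1$ the diamond closes via consequences of \eqref{eq:13}--\eqref{eq:16}; when $\jay\in\s{0,1}$ but $\s{j,\ell}\neq\s{0,1}$, applying $\tlk{0}{1}$ strictly \emph{raises} the denominator exponent of the pivot column and the case is retrograde; and when $j>1$ the case is disjoint or retrograde. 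Relation~\eqref{eq:17} is instead forced in the case $G=X_{[\alpha,\alpha+1]}$, $k>0$, $\ell=\alpha$, with $w_{\alpha+1}$ \emph{also} odd: then $w$ has at least four odd entries, the normal edges from $s$ and $r$ are both $\tlkd{\jay}{\alpha}i^{(\cdot)}_{[\alpha]}$, yet $t$ and $q$ differ by a non-trivial block operation on rows $\jay,\ell,\alpha+1,\ell'$, and $\vec G'$ must be a chain of seven generators whose commutation past $\tlkd{\jay}{\ell}$ is exactly \eqref{eq:case3.2.2.2.1}, a consequence of \eqref{eq:17}. Your proposal would leave this sub-case with no path along the bottom, and it would also miss that the level bound for this long $\vec G'$ is not automatic: the paper has to verify explicitly, by tracking the four relevant entries modulo $\gamma^2$, that every intermediate state in the seven-step chain has at least two even entries among those four rows and hence strictly smaller level than $s$. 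Your appeal to ``never enlarges the pivot index, so the level bound follows automatically'' is too weak here, since the level is the lexicographic triple $(p,k,m)$ and the intermediate count of odd entries is what must be controlled.
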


The proof that Lemma~\ref{lem:main} implies
Lemma~\ref{lem:maincoro-step} proceeds by induction on the level of
$s$. The base case arises when $s=I$. In that case, an easy case
distinction shows that the claim holds for all generators $G$.  For
the induction step, we have $s\neq I$, so there exists a unique normal
edge $s\xRightarrow{N}t$. By Lemma~\ref{lem:main}, there exists a
sequence of normal edges $r\xRightarrow{\vec N'}q$ and a sequence of
basic edges $t\xrightarrow{\vec G'}q$ such that
{\eqref{eqn:lem:main}} holds and $\level{\vec G'}<\level{s}$. Since
the level of each vertex occurring in the path $t\xrightarrow{\vec
  G'}q$ is strictly less than the level of $s$, by the induction
hypothesis, the claim of Lemma~\ref{lem:maincoro-step} is already true
for each edge in this path. Then the lemma follows by the following
diagram; note that each part commutes relationally and therefore so
does the whole diagram.
\[
\begin{tikzpicture}[xscale=1.5,yscale=0.9]
  \tikzset{inner node/.style={inner sep=0,outer sep=0.5}}
  \node(s) at (-2,3.5) {$s$};
  \node(r) at (2,3.5) {$r$};
  \node(t) at (-2,2) {$t$};
  \node(t1) at (-1,2) {$t_1$};
  \node(t2) at (0,2) {$t_2$};
  \node(t3) at (1,2) {$\ldots$};
  \node(q) at (2,2) {$q$};
  \node(I) at (0,0) {$I$};
  \draw[->] (s) -- node[above]{$G$} (r);
  \draw[-implies,double] (s) -- node[left]{$N$} (t);
  \draw[-implies,double] (r) -- node[right]{$\vec N'$} (q);
  \draw[->] (t) -- node[above]{$G'_1$} (t1);
  \draw[->] (t1) -- node[above]{$G'_2$} (t2);
  \draw[->] (t2) -- node[above]{$G'_3$} (t3);
  \draw[->] (t3) -- node[above]{$G'_n$} (q);
  \path [inner node] (t) edge[bend right,draw=none]
  node(t01) [pos=0.25] {}
  node(t02) [pos=0.55] {}
  node(t03) [pos=0.8] {}
  (I);
  \draw [] (t) edge[-implies,double,bend right=5] (t01);
  \draw [->] (t01) edge[-implies,double,bend right=5] (t02);
  \draw [->] (t02) edge[-implies,double,bend right=5] (t03);
  \draw [->] (t03) edge[-implies,double,bend right=5] (I);
  \path [inner node] (t1) edge[bend right=20,draw=none]
  node(t11) [pos=0.25] {}
  node(t12) [pos=0.55] {}
  node(t13) [pos=0.8] {}
  (I);
  \draw [] (t1) edge[-implies,double,bend right=5] (t11);
  \draw [->] (t11) edge[-implies,double,bend right=5] (t12);
  \draw [->] (t12) edge[-implies,double,bend right=5] (t13);
  \draw [->] (t13) edge[-implies,double,bend right=5] (I);
  \path [inner node] (t2) edge[draw=none]
  node(t21) [pos=0.33] {}
  node(t23) [pos=0.67] {}
  (I);
  \draw [] (t2) edge[-implies,double] (t21);
  \draw [->] (t21) edge[-implies,double] (t23);
  \draw [->] (t23) edge[-implies,double] (I);
  \path [inner node] (t3) edge[bend left=20,draw=none]
  node(t31) [pos=0.25] {}
  node(t32) [pos=0.55] {}
  node(t33) [pos=0.8] {}
  (I);
  \draw [] (t3) edge[-implies,double,bend left=5] (t31);
  \draw [->] (t31) edge[-implies,double,bend left=5] (t32);
  \draw [->] (t32) edge[-implies,double,bend left=5] (t33);
  \draw [->] (t33) edge[-implies,double,bend left=5] (I);
  \path [inner node] (q) edge[bend left,draw=none]
  node(t41) [pos=0.25] {}
  node(t42) [pos=0.55] {}
  node(t43) [pos=0.8] {}
  (I);
  \draw [] (q) edge[-implies,double,bend left=5] (t41);
  \draw [->] (t41) edge[-implies,double,bend left=5] (t42);
  \draw [->] (t42) edge[-implies,double,bend left=5] (t43);
  \draw [->] (t43) edge[-implies,double,bend left=5] (I);
  \node at (0,2.75) {$\simr$};
  \node at (-1.3,1.3) {$\simr$};
  \node at (-0.45,1.3) {$\simr$};
  \node at (0.45,1.3) {$\simr$};
  \node at (1.3,1.3) {$\simr$};
\end{tikzpicture}
\]
Given the above sequence of lemmas, to finish the completeness proof,
all that is now left to do is to prove Lemma~\ref{lem:main}. Since the
proof is a rather long and tedious case distinction, it can be found
in the Appendix.

\section{Conclusion and future work}

We have given a presentation by generators and relations of the group
of unitary $n\times n$-matrices with entries in the ring $\D[i] =
\Z[\frac{1}{2},i]$. This matrix group has some applications in quantum
computing because it arises as the group of unitary operations that
are exactly representable by a certain gate set that is a subset of
the Clifford+$T$ circuits; namely, it is generated by the $X$,
controlled-$X$, Toffoli and $S$ gates along with a modified Hadamard
gate $K=\omega^{\dagger}H$. We have described this group in terms of
generators that are 1- and 2-level matrices. We do not yet have a
complete set of relations for the quantum circuits generated by the
above gates, and doing so is a non-trivial problem because quantum
circuits are described in terms of the tensor operation $\otimes$ and
not the direct sum of vector spaces.

While other subgroups of the Clifford+$T$ group have been described by
generators and relations {\cite{Li-Ross-Selinger}}, this has not yet
been done for the Clifford+$T$ group itself, except in the case of
matrices of size $4\times 4$ {\cite{Greylyn}}. Doing so would require
extending our results from the ring $\D[i]$ to the ring $\D[\omega]$.
This problem turns out to be harder than one would expect, because as
the complexity of the ring increases, it becomes more and more
difficult to complete all the cases of the Main Lemma. This is left
for future work.

\bibliographystyle{eptcs}
\bibliography{gaussian}

\begin{thebibliography}{10}
\providecommand{\bibitemdeclare}[2]{}
\providecommand{\surnamestart}{}
\providecommand{\surnameend}{}
\providecommand{\urlprefix}{Available at }
\providecommand{\url}[1]{\texttt{#1}}
\providecommand{\href}[2]{\texttt{#2}}
\providecommand{\urlalt}[2]{\href{#1}{#2}}
\providecommand{\doi}[1]{doi:\urlalt{http://dx.doi.org/#1}{#1}}
\providecommand{\bibinfo}[2]{#2}

\bibitemdeclare{article}{Ross-Glaudell-Amy}
\bibitem{Ross-Glaudell-Amy}
\bibinfo{author}{Matthew \surnamestart Amy\surnameend},
  \bibinfo{author}{Andrew~N. \surnamestart Glaudell\surnameend} \&
  \bibinfo{author}{Neil~J. \surnamestart Ross\surnameend}
  (\bibinfo{year}{2020}): \emph{\bibinfo{title}{Number-theoretic
  characterizations of some restricted {Clifford+$T$} circuits}}.
\newblock {\sl \bibinfo{journal}{Quantum}} \bibinfo{volume}{4}, p.
  \bibinfo{pages}{252}, \doi{10.22331/q-2020-04-06-252}.
\newblock \bibinfo{note}{Also available from \arxiv{1908.06076}}.

\bibitemdeclare{article}{AM-2019}
\bibitem{AM-2019}
\bibinfo{author}{Matthew \surnamestart {Amy}\surnameend} \&
  \bibinfo{author}{Michele \surnamestart {Mosca}\surnameend}
  (\bibinfo{year}{2019}): \emph{\bibinfo{title}{{$T$}-count optimization and
  {Reed}-{Muller} codes}}.
\newblock {\sl \bibinfo{journal}{IEEE Transactions on Information Theory}}
  \bibinfo{volume}{65}(\bibinfo{number}{8}), pp. \bibinfo{pages}{4771--4784},
  \doi{10.1109/TIT.2019.2906374}.
\newblock \bibinfo{note}{{Also available from \arxiv{1601.07363}}}.

\bibitemdeclare{inproceedings}{debeaudrap2020fast}
\bibitem{debeaudrap2020fast}
\bibinfo{author}{Niel \surnamestart de~Beaudrap\surnameend},
  \bibinfo{author}{Xiaoning \surnamestart Bian\surnameend} \&
  \bibinfo{author}{Quanlong \surnamestart Wang\surnameend}
  (\bibinfo{year}{2020}): \emph{\bibinfo{title}{{Fast and effective techniques
  for {T}-count reduction via spider nest identities}}}.
\newblock In \bibinfo{editor}{Steven~T. \surnamestart Flammia\surnameend},
  editor: {\sl \bibinfo{booktitle}{15th Conference on the Theory of Quantum
  Computation, Communication and Cryptography (TQC 2020)}}, {\sl
  \bibinfo{series}{Leibniz International Proceedings in Informatics (LIPIcs)}}
  \bibinfo{volume}{158}, \bibinfo{publisher}{Schloss Dagstuhl--Leibniz-Zentrum
  f{\"u}r Informatik}, \bibinfo{address}{Dagstuhl, Germany}, pp.
  \bibinfo{pages}{11:1--11:23}, \doi{10.4230/LIPIcs.TQC.2020.11}.
\newblock \bibinfo{note}{Also available from \arxiv{2004.05164}}.

\bibitemdeclare{inproceedings}{de_Beaudrap_2020}
\bibitem{de_Beaudrap_2020}
\bibinfo{author}{Niel \surnamestart de~Beaudrap\surnameend},
  \bibinfo{author}{Xiaoning \surnamestart Bian\surnameend} \&
  \bibinfo{author}{Quanlong \surnamestart Wang\surnameend}
  (\bibinfo{year}{2020}): \emph{\bibinfo{title}{Techniques to reduce
  $\pi/4$-parity-phase circuits, motivated by the {ZX} calculus}}.
\newblock In: {\sl \bibinfo{booktitle}{Proceedings of the 16th International
  Conference on Quantum Physics and Logic, QPL 2019}}, {\sl
  \bibinfo{series}{Electronic Proceedings in Theoretical Computer Science}}
  \bibinfo{volume}{318}, p. \bibinfo{pages}{131–149},
  \doi{10.4204/eptcs.318.9}.
\newblock \bibinfo{note}{Also available from \arxiv{1911.09039}}.

\bibitemdeclare{article}{Giles-Selinger}
\bibitem{Giles-Selinger}
\bibinfo{author}{Brett \surnamestart Giles\surnameend} \&
  \bibinfo{author}{Peter \surnamestart Selinger\surnameend}
  (\bibinfo{year}{2013}): \emph{\bibinfo{title}{Exact synthesis of multiqubit
  {Clifford+$T$} circuits}}.
\newblock {\sl \bibinfo{journal}{Physical Review A}} \bibinfo{volume}{87}, p.
  \bibinfo{pages}{032332 (7 pages)}, \doi{10.1103/PhysRevA.87.032332}.
\newblock \bibinfo{note}{Also available from \arxiv{1212.0506}}.

\bibitemdeclare{phdthesis}{Greylyn}
\bibitem{Greylyn}
\bibinfo{author}{Seth E.~M. \surnamestart Greylyn\surnameend}
  (\bibinfo{year}{2014}): \emph{\bibinfo{title}{Generators and relations for
  the group {$U_4(\mathbb{Z}[\frac{1}{\sqrt{2}},i])$}}}.
\newblock \bibinfo{type}{{M.Sc.\@ thesis}}, \bibinfo{school}{Dalhousie
  University}.
\newblock \bibinfo{note}{Available from \arxiv{1408.6204}}.

\bibitemdeclare{article}{HC-2018}
\bibitem{HC-2018}
\bibinfo{author}{Luke~E. \surnamestart Heyfron\surnameend} \&
  \bibinfo{author}{Earl~T. \surnamestart Campbell\surnameend}
  (\bibinfo{year}{2018}): \emph{\bibinfo{title}{An efficient quantum compiler
  that reduces {$T$} count}}.
\newblock {\sl \bibinfo{journal}{Quantum Science and Technology}}
  \bibinfo{volume}{4}(\bibinfo{number}{1}), p. \bibinfo{pages}{015004},
  \doi{10.1088/2058-9565/aad604}.
\newblock \bibinfo{note}{Also available from \arxiv{1712.01557}}.

\bibitemdeclare{book}{ant-book}
\bibitem{ant-book}
\bibinfo{author}{Kenneth \surnamestart Ireland\surnameend} \&
  \bibinfo{author}{Michael \surnamestart Rosen\surnameend}
  (\bibinfo{year}{1982}): \emph{\bibinfo{title}{A Classical Introduction to
  Modern Number Theory}}.
\newblock \bibinfo{series}{Graduate Texts in Mathematics 84},
  \bibinfo{publisher}{Springer}, \doi{10.1007/978-1-4757-2103-4}.

\bibitemdeclare{article}{Kissinger-Wetering}
\bibitem{Kissinger-Wetering}
\bibinfo{author}{Aleks \surnamestart Kissinger\surnameend} \&
  \bibinfo{author}{John \surnamestart van~de Wetering\surnameend}
  (\bibinfo{year}{2020}): \emph{\bibinfo{title}{Reducing the number of
  non-{Clifford} gates in quantum circuits}}.
\newblock {\sl \bibinfo{journal}{Phys. Rev. A}} \bibinfo{volume}{102}, p.
  \bibinfo{pages}{022406}, \doi{10.1103/PhysRevA.102.022406}.

\bibitemdeclare{inproceedings}{Li-Ross-Selinger}
\bibitem{Li-Ross-Selinger}
\bibinfo{author}{Sarah~Meng \surnamestart Li\surnameend},
  \bibinfo{author}{Neil~J. \surnamestart Ross\surnameend} \&
  \bibinfo{author}{Peter \surnamestart Selinger\surnameend}
  (\bibinfo{year}{2021}): \emph{\bibinfo{title}{Generators and relations for
  the group {$O_n(\Z[1/2])$}}}.
\newblock In: {\sl \bibinfo{booktitle}{Proceedings of the 18th International
  Conference on Quantum Physics and Logic, QPL 2021}},
  \bibinfo{series}{Electronic Proceedings in Theoretical Computer Science}.
\newblock \bibinfo{note}{Also available from \arxiv{2106.01175}}.

\bibitemdeclare{article}{Nam_2018}
\bibitem{Nam_2018}
\bibinfo{author}{Yunseong \surnamestart Nam\surnameend},
  \bibinfo{author}{Neil~J. \surnamestart Ross\surnameend},
  \bibinfo{author}{Yuan \surnamestart Su\surnameend},
  \bibinfo{author}{Andrew~M. \surnamestart Childs\surnameend} \&
  \bibinfo{author}{Dmitri \surnamestart Maslov\surnameend}
  (\bibinfo{year}{2018}): \emph{\bibinfo{title}{Automated optimization of large
  quantum circuits with continuous parameters}}.
\newblock {\sl \bibinfo{journal}{Npj Quantum Information}}
  \bibinfo{volume}{4}(\bibinfo{number}{1}), \doi{10.1038/s41534-018-0072-4}.
\newblock \bibinfo{note}{Also available from \arxiv{1710.07345}}.

\bibitemdeclare{unpublished}{ZhangCheng19}
\bibitem{ZhangCheng19}
\bibinfo{author}{Fang \surnamestart {Zhang}\surnameend} \&
  \bibinfo{author}{Jianxin \surnamestart {Chen}\surnameend}
  (\bibinfo{year}{2019}): \emph{\bibinfo{title}{Optimizing {$T$} gates in
  {Clifford+$T$} circuit as $\pi/4$ rotations around {Paulis}}}.
\newblock \bibinfo{note}{{Available from \arxiv{1903.12456}}}.

\end{thebibliography}

\newpage
\appendix
\section{Appendix: Proof of the Main Lemma}

\parskip=0.85em

Before we prove Lemma~\ref{lem:main}, we collect a number of useful
consequences of the relations from Figure~\ref{rels}. These are shown
in Figure~\ref{useful}. The proofs of these relations are
straightforward.

To keep the proof of Lemma~\ref{lem:main} as readable as possible, we
make the following simplification. Each time we complete the diagram
{\eqref{eqn:lem:main}}, we will permit $\vec G'$ to be a sequence of
simple edges, rather than basic edges as required by the lemma. This
is justified because each such sequence of simple edges can be
expanded into a (usually much longer) sequence of basic edges whose
level is no higher than that of the original sequence.

With that in mind, we now proceed to prove Lemma~\ref{lem:main} by
case distinction. Assume $s\xrightarrow{G}r$ is a basic edge of the
Cayley graph, and $s\xRightarrow{N}t$ is a normal edge. Let
$L=(p,k,m)$ be the level of $s$; specifically, $p$ is the index of the
pivot column, $k$ is the least denominator exponent of the pivot
column, $m$ is the number of odd entries in $\gamma^k v$, where $v$ is
the pivot column. Also let $w=\gamma^k v$.

\case{$G = i_{[0]}$} Let $\jay$ be the index of the first odd entry of
$w$.
\subcase{$\jay > 0$}
Note that $0<j\leq p$. Then the normal edge $N$ does not act on row 0,
and $N$ is still the normal edge for state $r$. We complete the
diagram as follows.
\[\xymatrix{
  s
  \ar[r]^<>(.5){i_{[0]}}
  \ar@{=>}[d]_<>(.5){N} &
  r
  \ar@{=>}[d]^<>(.5){N} \\
  t
  \ar[r]^<>(.5){i_{[0]}}
  &
  q
}
\]
The diagram commutes relationally by {\eqref{eq:4}--\eqref{eq:9}},
which ensure that disjoint generators commute. We will encounter
additional cases in which the states $s$ and $r$ generate the same
syllable $N$ and the indices that $N$ acts on are disjoint from those
of $G$. We will refer to such cases as ``disjoint'' cases.

\subcase{$j=0$, $k=0$, and $p=0$} Since the $j$th entry of $v$ is odd,
by Lemma~\ref{lem:lde0}, it must be of the form $i^e$ for some
$e\in\s{0,\ldots,3}$. Note that $e=0$ is not possible, because then
$v$ would be $e_p$ and would not be a pivot column. Therefore
$e>0$. In that case, the normal edge from $s$ is $i_{[0]}^{4-e}$ and
we complete the diagram as follows.
\[\xymatrix{
  s
  \ar[r]^<>(.5){i_{[0]}}
  \ar@{=>}[d]_<>(.5){i_{[0]}^{4-e}} &
  r
  \ar@{=>}[d]^<>(.5){i_{[0]}^{3-e}} \\
  t
  \ar[r]_<>(.5){\eps} &
  t
}
\]
Note that here, $\xrightarrow{\eps}$ denotes a path of length
$0$. Also, $\xRightarrow{i_{[0]}^{3-e}}$ denotes a path of length $0$
if $e=3$ and a path of length 1 otherwise. The diagram commutes
relationally by reflexivity.

\subcase{$j=0$, $k=0$, and $p>0$} In this case, the exact synthesis
algorithm specifies that the normal edge from $s$ is
$\tlx{0}{p}i_{[0]}^{-e}$, and the normal edge from $r$ is
$\tlx{0}{p}i_{[0]}^{-e-1}$. Here and from now on, we will tacitly
understand all exponents of $i_{[j]}$ to be taken modulo 4, which is
justified by relation {\eqref{eq:1}}. Similarly, from now on we will
also tacitly use relations {\eqref{eq:2}} and {\eqref{eq:3}} to invert
the $X$ and $K$ generators when appropriate. We complete the diagram
as follows.
\[\xymatrix{
  s
  \ar[r]^<>(.5){i_{[0]}}
  \ar@{=>}[d]_<>(.5){\tlx{0}{p}i_{[0]}^{-e}} &
  r
  \ar@{=>}[d]^<>(.5){\tlx{0}{p}i_{[0]}^{-e-1}} \\
  t
  \ar[r]_<>(.5){\eps} &
  q
}
\]

\begin{figure}
  \begin{align}
    \tlkd{\jay}{\ell}\tli{\jay}{} & \,\simr\,\tli{\jay}{}\tli{\ell}{}\tlx{\jay}{\ell}\tlkd{\jay}{\ell}\tli{\ell}{} \label{eq:case1.4b} \\
    \tlk{j}{\ell} &\,\simr\, \tli{j}{3}\tli{\ell}{3}\tlkd{j}{\ell} \label{eq:case2.2.1a} \\
    \tlkd{j}{\ell}\tli{\ell}{}\tlk{j}{\ell} & \,\simr\, \tli{\ell}{3}\tlx{j}{\ell}\tlkd{j}{\ell}\tli{\ell}{} \label{eq:case2.2.1b} \\
    X_{[j,\ell]}i^{q}_{[j]}\tlx{\kay}{\ell} & \,\simr\, X_{[j,\kay]}X_{[j,\ell]}i^{q}_{[j]} \label{eq:case3.1.2.3} \\
    X_{[\kay,\ell]}i^{q}_{[\kay]}\tlx{\jay}{\kay} & \,\simr\, X_{[\jay,\kay]}X_{[\jay,\ell]}i^{q}_{[\jay]} \label{eq:case3.1.3.1} \\
    \tlk{j}{\ell}i^{q}_{[\ell]}\tlx{\kay}{\ell} & \,\simr\, \tlx{\kay}{\ell}\tlk{j}{\kay}i^{q}_{[\kay]}\label{eq:case3.2.2.2a} \\
    \tlkd{\ell}{\ell'}\tlkd{j}{j'}\tlkd{j'}{\ell'}\tlkd{j}{\ell}
    \tlx{\ell}{\jay'} & \,\simr\, \tlx{\ell}{\jay'}
    \tlkd{\ell}{\ell'}\tlkd{j}{j'}\tlkd{j'}{\ell'}\tlkd{j}{\ell}
    \label{eq:case3.2.2.2.1} \\
    \tlkd{\jay}{\ell}i^{}_{[\ell]} \tlx{\jay}{\ell} & \,\simr\,
    \tlx{\jay}{\ell}\tli{\jay}{3}\tli{\ell}{} \tlkd{\jay}{\ell}i^{}_{[\ell]}\label{eq:case3.2.2.3}
  \end{align}
  
  \caption{Some useful relations in $\simr$. All of these are
    consequences of the relations in Figure~\ref{rels}.  As before, in
    each relation, the indices are assumed to be distinct, and when a
    generator $\tlx{a}{b}$ or $\tlk{a}{b}$ is mentioned, we assume
    $a<b$. $K\da$ abbreviates $K^7$.}
  \label{useful}
  \hfill\rule{0.95\textwidth}{0.1mm}\hfill\hbox{}
\end{figure}

\subcase{$j=0$ and $k>0$} By Lemma~\ref{lem:evenodd}, $w$ has an even
number of odd entries. Let $\ell$ be the index of the second odd entry
of $w$. In this case, the exact synthesis algorithm specifies that the
normal edge from $s$ is $\tlkd{0}{\ell}\tli{\ell}{q}$ and the normal
edge from $r$ is $\tlkd{0}{\ell}\tli{\ell}{q'}$, for $q,q'\in\s{0,1}$
and $q'\neq q$.  We complete the diagram as follows:
\[\xymatrix@C+6ex{
  s
  \ar[r]^<>(.5){i_{[0]}}
  \ar@{=>}[d]_<>(.5){\tlkd{0}{\ell}\tli{\ell}{q}} &
  r
  \ar@{=>}[d]^<>(.5){\tlkd{0}{\ell}\tli{\ell}{q'}} \\
  t
  \ar[r]^<>(.5){i_{[0]}i_{[\ell]}\tlx{0}{\ell}^q}
  &
  q
}
\]
This diagram commutes relationally by {\eqref{eq:15}} when
$q=0$ and by {\eqref{eq:case1.4b}} when $q=1$.

\case{$G = \tlk{0}{1}$}

\subcase{$k=0$} Let $\jay$ be the index of the first odd entry of $w$.

\subsubcase{$\jay < 2$} In this case, the exact synthesis
algorithm specifies that the normal edge from $r$ is $\tlkd{0}{1}$. We
complete the diagram as follows, and it commutes relationally by
{\eqref{eq:3}}.
\[\xymatrix{
  s
  \ar[r]^<>(.5){\tlk{0}{1}}
  \ar@{=>}[d]_<>(.5){N} &
  r
  \ar@{=>}[d]^<>(.5){\tlkd{0}{1}} \\
  t
  \ar[rd]^<>(.5){\eps}
  &s
  \ar@{=>}[d]^<>(.5){N} \\
  &q
}
\]
We will encounter additional cases in which the normal edge from $r$
is relationally the inverse of $G$. In such cases, the diagram can
always be completed in the same way. We refer to these cases as
``retrograde''.

\subsubcase{$j\geq 2$} Note that $v_0=v_1=0$. Here, and from
now on, we write $v_j$ for the $j$th component of a vector $v$. Since
$j\leq p$, the normal edges from both $s$ and $r$ are
$X_{[j,p]}i^{-e}_{[j]}$, which are disjoint from $\oh_{[0,1]}$. This
is a disjoint case.

\subcase{$k>0$} By Lemma~\ref{lem:evenodd}, $w$ has an even number of
odd entries. Let $\jay$ and $\ell$ be the indices of the first two odd
entries of $w$.

\subsubcase{$\jay=0$ and $\ell=1$} In this case, the exact synthesis
algorithm specifies that the normal edge from $s$ is
$\tlkd{0}{1}\tli{1}{q}$. If $q=0$, then $\level{r}<\level{s}$, and we
complete the diagram as follows. It commutes relationally by
{\eqref{eq:case2.2.1a}}.
\[\xymatrix@C+2ex{
  s
  \ar[r]^<>(.5){\tlk{0}{1}}
  \ar@{=>}[d]_<>(.5){\tlkd{0}{1}} &
  r
  \ar@{=>}[d]^<>(.5){\eps} \\
  t
  \ar[r]^<>(.5){\tli{0}{3}\tli{1}{3}}
  &r
}
\]
If $q=1$, then the exact synthesis algorithm specifies that the normal
edge from $r$ is also $\tlkd{0}{1}\tli{1}{}$. In this case, we
complete the diagram as follows. It commutes relationally by
{\eqref{eq:case2.2.1b}}.
\[\xymatrix@C+2ex{
  s
  \ar[r]^<>(.5){\tlk{0}{1}}
  \ar@{=>}[d]_<>(.5){\tlkd{0}{1}\tli{1}{}} &
  r
  \ar@{=>}[d]^<>(.5){\tlkd{0}{1}\tli{1}{}} \\
  t
  \ar[r]^<>(.5){\tli{1}{3}\tlx{0}{1}}
  &r
}
\]

\subsubcase{$j=0$ and $\ell>1$} Note that $j<\ell\leq p$, so the first
two entries in each column after the pivot column are $0$, hence
$\tlk{0}{1}$ does not change $p$. But it will increase the least
denominator exponent of the pivot column from $k$ to $k+1$.  The exact
synthesis algorithm then specifies that the normal edge from $r$ is
$\tlkd{0}{1}$, so this case is retrograde.

\subsubcase{$j=1$} This is similar to the previous case. Again,
$\tlk{0}{1}$ increases the denominator exponent of the pivot column,
the normal edge is $\tlkd{0}{1}$, and so the case is retrograde.
      
\subsubcase{$j>1$} In this case, $\lde{v_0, v_1}<k$. Note that
$\jay<\ell\leq p$, so the first two entries in each column after the
pivot column are $0$, hence $\tlk{0}{1}$ does not change $p$. The
exact synthesis algorithm specifies that the normal edge from $s$ is
$\tlkd{\jay}{\ell}\tli{\ell}{q}$. Let $v'$ be the pivot column of
$r$.

If $\lde{v'_0, v'_1}<k$, then the normal edge from $r$ is also
$\tlkd{\jay}{\ell}\tli{\ell}{q}$ and the case is disjoint.  On the
other hand, if $\lde{v'_0, v'_1}=k$. Let $w'=\gamma^kv'$. One can show
that in this case, $w'_0\equiv w'_1\Mod{\gamma^2}$, and therefore the
normal edge from $r$ is $\tlkd{0}{1}$. Then this case is retrograde.

\case{$G = X_{[\alpha,\alpha+1]}$}

\subcase{$k=0$} Let $\jay$ be the index of the first odd entry of $v$.
Note that $\jay\leq p$. Also, by Lemma~\ref{lem:lde0}, the $j$th entry
of $v$ is of the form $i^e$ for some $e\in\s{0,\ldots,3}$.

\subsubcase{$\alpha \geq p$} Applying $X_{[\alpha,\alpha+1]}$
increases $p$, and the exact synthesis algorithm specifies that the
normal edge from $r$ is $X_{[\alpha,\alpha+1]}$. Hence this case is
retrograde using {\eqref{eq:2}}.

\subsubcase{$\alpha = p - 1$} 

\subsubsubcase{$j = \alpha+1$} Note that $e=0$ is not possible,
because then $v$ would be $e_p$ and would not be a pivot
column. Therefore $e>0$. The exact synthesis algorithm specifies that
the normal edge from $s$ is $i_{[\alpha+1]}^{-e}$ and the normal edge
from $r$ is $\tlx{\alpha}{\alpha+1}\tli{\alpha}{-e}$.  We complete the
diagram as follows, and it commutes relationally by {\eqref{eq:10}}
and {\eqref{eq:2}}.

\[\xymatrix@C+4ex{
  s
  \ar[r]^<>(.5){\xal}
  \ar@{=>}[d]_<>(.5){i^{-e}_{[\alpha+1]}} &
  r
  \ar@{=>}[d]^<>(.5){\xal i^{-e}_{[\alpha]}} \\
  t
  \ar[r]^<>(.5){\eps} &
  q \\
}
\]

\subsubsubcase{$j = \alpha$} The exact synthesis algorithm specifies
$\tlx{\alpha}{\alpha+1}\tli{\alpha}{-e}$ and $\tli{\alpha+1}{-e}$ as
the normal edges from $s$ and $r$, respectively. We complete the
diagram as follows, and it commutes relationally by {\eqref{eq:10}}.
\[\xymatrix@C+4ex{
  s
  \ar[r]^<>(.5){\xal}
  \ar@{=>}[d]_<>(.5){\tlx{\alpha}{\alpha+1}\tli{\alpha}{-e}} &
  r
  \ar@{=>}[d]^<>(.5){\tli{\alpha+1}{-e}} \\
  t
  \ar[r]^<>(.5){\eps} &
  q \\
}
\]

\subsubsubcase{$j \leq \alpha-1$} The exact synthesis algorithm
specifies that the normal edge from both $s$ and $r$ is
$X_{[j,\alpha+1]}i^{-e}_{[j]}$. We complete the diagram as follows. It
commutes relationally by {\eqref{eq:case3.1.2.3}}.
\[\xymatrix@C+4ex{
  s
  \ar[r]^<>(.5){\xal}
  \ar@{=>}[d]_<>(.5){X_{[j,\alpha+1]}i^{-e}_{[j]}} &
  r
  \ar@{=>}[d]^<>(.5){X_{[j,\alpha+1]}i^{-e}_{[j]}} \\
  t
  \ar[r]^<>(.5){X_{[j,\alpha]}} &
  q \\
}
\]

\subsubcase{$\alpha \leq p-2$} 

\subsubsubcase{$j = \alpha$} The exact synthesis algorithm specifies
that the normal edge from $s$ is $X_{[\alpha,p]}i^{-e}_{[\alpha]}$ and
the normal edge from $r$ is $X_{[\alpha+1,p]}i^{-e}_{[\alpha+1]}$ We
complete the diagram as follows. It commutes relationally by
{\eqref{eq:case3.1.3.1}}.
\[\xymatrix@C+4ex{
  s
  \ar[r]^<>(.5){\xal}
  \ar@{=>}[d]_<>(.5){X_{[\alpha,p]}i^{-e}_{[\alpha]}} &
  r
  \ar@{=>}[d]^<>(.5){X_{[\alpha+1,p]}i^{-e}_{[\alpha + 1]}} \\
  t
  \ar[r]^<>(.5){X_{[\alpha,\alpha+1]}} &
  q \\
}
\]

\subsubsubcase{$j = \alpha+1$}
The exact synthesis algorithm specifies
that the normal edge from $s$ is $X_{[\alpha+1,p]}i^{-e}_{[\alpha+1]}$ and
the normal edge from $r$ is $X_{[\alpha,p]}i^{-e}_{[\alpha]}$ We
complete the diagram as follows. It commutes relationally by
{\eqref{eq:case3.1.3.1}} and {\eqref{eq:2}}.
\[\xymatrix@C+4ex{
  s
  \ar[r]^<>(.5){\xal}
  \ar@{=>}[d]_<>(.5){X_{[\alpha+1,p]}i^{-e}_{[\alpha+1]}} &
  r
  \ar@{=>}[d]^<>(.5){X_{[\alpha,p]}i^{-e}_{[\alpha ]}} \\
  t
  \ar[r]^<>(.5){X_{[\alpha,\alpha+1]}} &
  q \\
}
\]

\subsubsubcase{$j \neq \alpha$ and $j \neq \alpha+1$} In this case,
both normal edges are $X_{[\jay,p]}i^{-e}_{[\jay]}$. This case is
disjoint.

\subcase{$k>0$} By Lemma~\ref{lem:evenodd}, $w$ has an even number of
odd entries. Let $\jay$ and $\ell$ be the indices of the first two odd
entries of $w$.

\subsubcase{$ \alpha \geq p$} Applying $X_{[\alpha,\alpha+1]}$
increases $p$, and the normal edge from $r$ is $X_{[\alpha,\alpha+1]}$. Therefore this case is retrograde.

\subsubcase{$\alpha < p$} We will do a case distinction on how
$\jay<\ell$ overlaps with $\alpha<\alpha+1$.

\subsubsubcase{$\ell < \alpha$} The normal edge from both $s$ and $r$
is $\tlkd{\jay}{\ell}\tli{\ell}{q}$, so this case is disjoint.

\subsubsubcase{$\ell = \alpha$} The exact synthesis algorithm
specifies that the normal edge from $s$ is
$\tlkd{\jay}{\alpha}\tli{\alpha}{q}$, for some $q\in\s{0,1}$.

If $w_{\alpha+1}$ is even, then the normal edge from $r$ will be
$\tlkd{\jay}{\alpha+1}\tli{\alpha+1}{q}$. In this case, we can
complete the diagram as follows. It commutes relationally by
{\eqref{eq:case3.2.2.2a}}.
\[\xymatrix{
  s
  \ar[rr]^<>(.5){\xal}
  \ar@{=>}[d]_<>(.5){\tlk{j}{\alpha}i^{q}_{[\alpha]}} &&
  r
  \ar@{=>}[d]^<>(.5){\tlk{j}{\alpha+1}i^{q}_{[\alpha+1]}} \\
  t
  \ar[rr]^<>(.5){\xal} &&
  q \\
}
\]
Now assume that $w_{\alpha+1}$ is odd. By Lemma~\ref{lem:evenodd}, we
have a fourth odd entry. Let $\jay'= \alpha+1$ and $\ell'$ be the
index of the fourth odd entry. By Lemma~\ref{lem:modulo-1-3}, we have
$w_{\jay} = i^e + a\gamma^3$, $w_{\ell} = i^f + b\gamma^3$, $w_{\jay'}
= i^g + c\gamma^3$, and $w_{\ell'} = i^h + d\gamma^3$, for some
$e,f,g,h\in\s{0,\ldots,3}$ and $a,b,c,d\in\Z[i]$.

\subsubsubsubcase{$e=f=g=h=0$} In this special case, the normal edges
from $s$ and $r$ are both $\tlkd{\jay}{\ell}$. We complete the diagram
as follows.

\[
\xymatrix@C+4ex{
  s
  \ar[rrrrrrr]^<>(.5){\tlx{\ell}{\jay'}}
  \ar@{=>}[d]_<>(.5){\tlkd{j}{\ell}} &&&&&&&
  r
  \ar@{=>}[d]^<>(.5){\tlkd{j}{\ell}} \\
  t
  \ar[r]^<>(.5){\tlkd{j'}{\ell'}} &
  \ar[r]^<>(.5){\tlkd{j}{j'}} &
  \ar[r]^<>(.5){\tlkd{\ell}{\ell'}} &
  \ar[r]^<>(.5){\tlx{\ell}{\jay'}} &
  \ar[r]^<>(.5){\tlk{\ell}{\ell'}} &
  \ar[r]^<>(.5){\tlk{j}{j'}} &
  \ar[r]^<>(.5){\tlk{j'}{\ell'}} &
  q \\
}
\]
The fact that this diagram commutes relationally follows from
{\eqref{eq:case3.2.2.2.1}}. We must verify that it satisfies the
level condition. The following diagram shows only entries
$j,\ell,j',\ell'$ of $\gamma^k$ times the $p$th column of each state.
Since in all 8 states below the top row, at least two entries are
even, the level of all of these states is strictly less than that of
$s$.
\[\xymatrix{
  *{\begin{pmatrix}
      1 + a \gamma^3\\
      1 + b \gamma^3\\
      1 + c \gamma^3\\
      1 + d \gamma^3\\
  \end{pmatrix}}
  \ar[rr]^<>(.5){\tlx{\ell}{\jay'}}
  \ar@{=>}[d]_<>(.5){\tlkd{j}{\ell}} &&
  *{\begin{pmatrix}
      1 + a \gamma^3\\
      1 + c \gamma^3\\
      1 + b \gamma^3\\
      1 + d \gamma^3\\
  \end{pmatrix}}
  \ar@{=>}[d]^<>(.5){\tlkd{j}{\ell}}
  \\
  *{\begin{pmatrix}
      (1+i) + i(a+b) \gamma^2\\
      i(a-b) \gamma^2\\
      1 + c \gamma^3\\
      1 + d \gamma^3\\
  \end{pmatrix}}
  \ar@{->}[d]_<>(.5){\tlkd{j'}{\ell'}}&&
  *{\begin{pmatrix}
      (1+i) + i(a+c) \gamma^2\\
      i(a-c) \gamma^2\\
      1 + b \gamma^3\\
      1 + d \gamma^3\\
  \end{pmatrix}}
  \ar@{<-}[d]^<>(.5){\tlk{j'}{\ell'}}
  \\
  *{\begin{pmatrix}
      (1+i) + i(a+b) \gamma^2\\
      i(a-b) \gamma^2\\
      (1+i) + i(c+d) \gamma^2\\
      i(c-d) \gamma^2\\
  \end{pmatrix}}
  \ar@{->}[d]_<>(.5){\tlkd{j}{j'}}&&
  *{\begin{pmatrix}
      (1+i) + i(a+c) \gamma^2\\
      i(a-c) \gamma^2\\
      (1+i) + i(b+d) \gamma^2\\
      i(b-d) \gamma^2\\
  \end{pmatrix}}
  \ar@{<-}[d]^<>(.5){\tlk{j}{j'}}
  \\
  *{\begin{pmatrix}
      2i - (a+b+c+d)\gamma \\
      i(a-b) \gamma^2\\
      - (a+b-c-d)\gamma \\
      i(c-d) \gamma^2\\
  \end{pmatrix}}
  \ar@{->}[d]_<>(.5){\tlkd{\ell}{\ell'}}&&
  *{\begin{pmatrix}
      2i - (a+c+b+d)\gamma \\
      i(a-c) \gamma^2\\
      - (a+c-b-d)\gamma \\
      i(b-d) \gamma^2\\
  \end{pmatrix}}
  \ar@{<-}[d]^<>(.5){\tlk{\ell}{\ell'}}
  \\
  *{\begin{pmatrix}
      2i - (a+b+c+d)\gamma \\
      -(a-b+c-d)\gamma\\
      - (a+b-c-d)\gamma \\
      -(a-b-c+d)\gamma\\
  \end{pmatrix}}
  \ar[rr]^<>(.5){\tlx{\ell}{\jay'}}&&
  *{\begin{pmatrix}
      2i - (a+c+b+d)\gamma \\
      -(a-c+b-d)\gamma\\
      - (a+c-b-d)\gamma \\
      -(a-c-b+d)\gamma\\
  \end{pmatrix}}
}
\]

\subsubsubsubcase{Otherwise} In the general case, the normal edge from
$s$ is $\tlkd{\jay}{\ell}i^{q}_{[\ell]}$, where $q=0$ if $e-f$ is even
and $q=1$ if $e-f$ is odd. Similarly, the normal edge from $r$ is
$\tlkd{\jay}{\ell}i^{q'}_{[\ell]}$, where $q'=0$ if $e-g$ is even and
$q'=1$ if $e-g$ is odd.  Define $q'' = 0$ when $-q-f+e\equiv 0\Mod{4}$
and $q''=1$ when $e-f-q\equiv 2\Mod{4}$. Similarly, define $q''' = 0$
when $e-g-q'\equiv 0\Mod{4}$ and $q'''=1$ when $-q-g+e\equiv
2\Mod{4}$.  We complete the diagram using the outer perimeter of the
following figure. Here, the bottom edge is given as in the previous
case. 

\[\xymatrix@C+4ex@R+4ex{
  &s
  \ar[rrr]^<>(.5){\tlx{\ell}{\jay'}}
  \ar@{->}[dd]^<(0.4){\tli{\jay}{-e}\tli{\ell}{-f}\tli{\jay'}{-g}\tli{\ell'}{-h}}
  \ar@{=>}[ld]_<>(.5){\tlkd{\jay}{\ell}i^{q}_{[\ell]}}&&&
  r
  \ar@{=>}[rd]^<>(.5){\tlkd{\jay}{\ell}i^{q'}_{[\ell]}}
  \ar@{->}[dd]_<(0.4){\tli{\jay}{-e}\tli{\ell}{-g}\tli{\jay'}{-f}\tli{\ell'}{-h}}&\\
  \ar@{->}[dd]_<(0.4){\tli{\jay}{-e}\tli{\ell}{-e}\tlx{\jay}{\ell}^{q''}\tli{\jay'}{-g}\tli{\ell'}{-h}}
  &&&&&\ar@{->}[dd]^<(0.4){\tli{\jay}{-e}\tli{\ell}{-e}\tlx{\jay}{\ell}^{q'''}\tli{\jay'}{-f}\tli{\ell'}{-h}}\\
  & s'\ar@{=>}[ld]^<>(.5){\tlkd{\jay}{\ell}} \ar[rrr]_<>(.5){\tlx{\ell}{\jay'}}&&&r'\ar@{=>}[rd]_<>(.5){\tlkd{\jay}{\ell}}&\\
  \ar[rrrrr]_<>(.5){\text{previous case}}&&&&&\\          
}
\]

To see that the outer perimeter relationally commutes, it suffices to
show that the four inner faces relationally commutes. The bottom face
does so by the previous case. The middle face commutes by repeated
applications of {\eqref{eq:5}} and {\eqref{eq:10}}. To see why the
left face commutes, we first note that
\begin{equation}\label{eq:q-prime-prime}
  \tlkd{\jay}{\ell}\,\tli{\ell}{e-f-q}
  \,\simr\,
  \tlx{\jay}{\ell}^{q''}\,\tlkd{\jay}{\ell}.
\end{equation}
Namely, this holds by {\eqref{eq:13}} in case $e-f-q\equiv 2\Mod{4}$
and $q''=1$, and it holds trivially in case $e-f-q\equiv 0\Mod{4}$ and
$q''=0$. Then the left face commutes because
\begin{align*}
  \tlkd{\jay}{\ell}\,\tli{\jay}{-e}\,\tli{\ell}{-f}\,\tli{\jay'}{-g}\,\tli{\ell'}{-h}
  &\,\simr\,
  \tlkd{\jay}{\ell}\,\tli{\jay}{-e}\,\tli{\ell}{-e}\,\tli{\ell}{e-f-q}\,\tli{\ell}{q}\tli{\jay'}{-g}\,\tli{\ell'}{-h}
  && \mbox{by {\eqref{eq:1}}}
  \\
  &\,\simr\,
  \tli{\jay}{-e}\,\tli{\ell}{-e}\,\tlkd{\jay}{\ell}\,\tli{\ell}{e-f-q}\,\tli{\ell}{q}\tli{\jay'}{-g}\,\tli{\ell'}{-h}
  && \mbox{by {\eqref{eq:15}}}
  \\
  &\,\simr\,
  \tli{\jay}{-e}\,\tli{\ell}{-e}\,\tlx{\jay}{\ell}^{q''}\,\tlkd{\jay}{\ell}\,i^{q}_{[\ell]}\,\tli{\jay'}{-g}\,\tli{\ell'}{-h}
  &&\mbox{by {\eqref{eq:q-prime-prime}}}
  \\
  &\,\simr\,
  \tli{\jay}{-e}\,\tli{\ell}{-e}\,\tlx{\jay}{\ell}^{q''}\,\tli{\jay'}{-g}\,\tli{\ell'}{-h}\,\tlkd{\jay}{\ell}\,i^{q}_{[\ell]}
  &&\mbox{by {\eqref{eq:4}} and {\eqref{eq:6}}}.
\end{align*}
The right face commutes for the same reason, just swapping the rules
of $f$ and $g$. Finally, we need to verify that the diagram satisfies
the level condition. To this end, note that $s$, $r$, $s'$, and $r'$
all have the same level, because the operations $\tlx{\ell}{\jay'}$
and $\tli{\jay}{-e}\,\tli{\ell}{-f}\,\tli{\jay'}{-g}\,\tli{\ell'}{-h}$
neither change the pivot column, the denominator exponent, nor the
number of odd entries. Together with the fact that normal edges are
level decreasing and with what was shown in the previous case, this
implies the level condition.

\subsubsubcase{$\ell = \alpha+1$ and $\jay=\alpha$} In this case, the
exact synthesis algorithm prescribes that the normal edges from both
$s$ and $r$ are $\tlkd{\alpha}{\alpha+1}\tli{\alpha+1}{q}$, for
$q\in\s{0,1}$. We complete the diagram as follows. It commutes
relationally by the inverse of {\eqref{eq:13}} when $q=0$ and
by {\eqref{eq:case3.2.2.3}} when $q=1$.

\[\xymatrix@C+4ex{
  s
  \ar[rr]^<>(.5){\xal}
  \ar@{=>}[d]_<>(.5){\tlkd{\alpha}{\alpha+1}i^{q}_{[\alpha+1]}} &&
  r
  \ar@{=>}[d]^<>(.5){\tlkd{\alpha}{\alpha+1}i^{q}_{[\alpha+1]}} \\
  t
  \ar[rr]_<>(.5){\tlx{\alpha}{\alpha+1}^q\tli{\alpha}{-q}\tli{\alpha+1}{2-q}} &&
  q \\
}
\]

\subsubsubcase{$\ell = \alpha+1$ and $\jay \neq \alpha$} In this case,
the exact synthesis algorithm specifies that the normal edge from $s$
is $\tlkd{\jay}{\alpha+1}\tli{\alpha+1}{q}$ and the normal edge from
$r$ is $\tlkd{\jay}{\alpha}\tli{\alpha}{q}$. We complete the diagram as
follows. It commutes relationally by {\eqref{eq:10}} and {\eqref{eq:12}}.
\[\xymatrix{
  s
  \ar[rr]^<>(.5){\xal}
  \ar@{=>}[d]_<>(.5){\tlkd{\jay}{\alpha+1}i^{q}_{[\alpha+1]}} &&
  r
  \ar@{=>}[d]^<>(.5){\tlkd{\jay}{\alpha}i^{q}_{[\alpha]}} \\
  t
  \ar[rr]_<>(.5){\xal} &&
  q \\
}
\]
\subsubsubcase{$\ell > \alpha + 1$ and $\jay < \alpha$}
This case is disjoint.

\subsubsubcase{$\ell > \alpha + 1$ and $\jay = \alpha$} In this case,
the exact synthesis algorithm specifies that the normal edge from $s$
is $\tlkd{\alpha}{\ell}\tli{\ell}{q}$ and the normal edge from
$r$ is $\tlkd{\alpha+1}{\ell}\tli{\ell}{q}$. We complete the diagram
as follows. It commutes relationally by {\eqref{eq:5}} and
{\eqref{eq:12}}.
\[\xymatrix{
  s
  \ar[rr]^<>(.5){\xal}
  \ar@{=>}[d]_<>(.5){\tlkd{\alpha}{\ell}i^{q}_{[\ell]}} &&
  r
  \ar@{=>}[d]^<>(.5){\tlkd{\alpha+1}{\ell}i^{q}_{[\ell]}} \\
  t
  \ar[rr]_<>(.5){\xal} &&
  q \\
}
\]

\subsubsubcase{$\ell > \alpha + 1$ and $\jay = \alpha+1$} In this
case, the exact synthesis algorithm specifies that the normal edge
from $s$ is $\tlkd{\alpha+1}{\ell}\tli{\ell}{q}$ and the normal edge
from $r$ is $\tlkd{\alpha}{\ell}\tli{\ell}{q}$. We complete the
diagram as follows. It commutes relationally by {\eqref{eq:5}} and
{\eqref{eq:12}}.

\[\xymatrix{
  s
  \ar[rr]^<>(.5){\xal}
  \ar@{=>}[d]_<>(.5){\tlkd{\alpha+1}{\ell}i^{q}_{[\ell]}} &&
  r
  \ar@{=>}[d]^<>(.5){\tlkd{\alpha}{\ell}i^{q}_{[\ell]}} \\
  t
  \ar[rr]_<>(.5){\xal} &&
  q \\
}
\]

\subsubsubcase{$\ell > \alpha + 1$ and $\jay > \alpha+1$}
This case is disjoint.

This finishes the proof of Lemma~\ref{lem:main}, and therefore of
completeness.


\end{document}